\newcommand{\E}{\mathbb{E}}
\newcommand{\R}{\mathbb{R}}
\newcommand{\N}{\mathbb{N}}
\newcommand{\C}{\mathcal{C}}
\renewcommand{\P}{\mathcal{P}}
\newcommand{\eps}{\epsilon}
\renewcommand\vec[1]{\ensuremath\boldsymbol{#1}}
\newcommand\pref[1]{#1}
\renewcommand{\paragraph}[1]{\medskip\noindent{\bf #1}}
\newtheorem*{rep@theorem}{\rep@title}
\newcommand{\newreptheorem}[2]{%
\newenvironment{rep#1}[1]{%
  \def\rep@title{#2 \ref{##1}}%
  \begin{rep@theorem}}%
  {\end{rep@theorem}}}
\newtheorem{theorem}{Theorem}
\newtheorem{corollary}[theorem]{Corollary}
\newtheorem{lemma}[theorem]{Lemma}
\newtheorem{claim}[theorem]{Claim}
\newtheorem{definition}[theorem]{Definition}
\theoremstyle{definition}
\begin{document}

\begin{titlepage}

\title{Stronger Impossibility Results for Strategy-Proof Voting with i.i.d.~Beliefs}
\setcounter{page}{0}

\author{Samantha Leung\thanks{Leung was supported in part by NSF grants IIS-0911036 and CCF-1214844, AFOSR grant FA9550-08-1-0438, ARO grant W911NF-14-1-0017, and by the DoD Multidisciplinary University Research Initiative (MURI) program administered by AFOSR under grant FA9550-12-1-0040.} \\ Cornell University\\ samlyy@cs.cornell.edu \and
Edward Lui \\ Cornell University \\ luied@cs.cornell.edu \and
Rafael Pass\thanks{Pass is supported in part by an Alfred P. Sloan Fellowship, Microsoft New Faculty Fellowship, NSF CAREER Award CCF-0746990, NSF Award CCF-1214844, NSF Award CNS-1217821, AFOSR YIP Award FA9550-10-1-0093, and DARPA and AFRL under contract FA8750-11-2-0211.} \\ Cornell University \\ rafael@cs.cornell.edu}

\maketitle

\begin{abstract}
The classic Gibbard-Satterthwaite theorem says that every strategy-proof voting rule with at least three possible candidates must be dictatorial. In \cite{McL11}, McLennan showed that a similar impossibility result holds even if we consider a weaker notion of strategy-proofness where voters believe that the other voters' preferences are i.i.d.~(independent and identically distributed): If an anonymous voting rule (with at least three candidates) is strategy-proof w.r.t.~all i.i.d.~beliefs and is also Pareto efficient, then the voting rule must be a random dictatorship. In this paper, we strengthen McLennan's result by relaxing Pareto efficiency to $\epsilon$-Pareto efficiency where Pareto efficiency can be violated with probability $\epsilon$, and we further relax $\epsilon$-Pareto efficiency to a very weak notion of efficiency which we call $\epsilon$-super-weak unanimity. We then show the following: If an anonymous voting rule (with at least three candidates) is strategy-proof w.r.t.~all i.i.d.~beliefs and also satisfies $\epsilon$-super-weak unanimity, then the voting rule must be $O(\epsilon)$-close to random dictatorship.
\end{abstract}
\end{titlepage}

\section{Introduction}

People have long desired to have a good voting rule that is \emph{strategy-proof}---that is, the voters would not want to lie about their true preferences. Unfortunately, the celebrated Gibbard-Satterthwaite theorem~\cite{Gib73,Sat75} shows that if there are at least three possible candidates, then any deterministic strategy-proof voting rule has to be dictatorial---that is, there exists a fixed voter whose top choice is always the winner. Although the Gibbard-Satterthwaite theorem only applies to \emph{deterministic} voting rules, Gibbard later generalized the Gibbard-Satterthwaite theorem to \emph{randomized} voting rules \cite{Gib77}. In particular, Gibbard showed that any randomized strategy-proof voting rule has to be a probability distribution over \emph{unilateral rules} and \emph{duple rules}, where a unilateral rule depends only on a single voter, and a duple rule chooses only between two possible candidates; furthermore, if the voting rule satisfies the natural condition of \emph{Pareto efficiency}---that is, the voting rule \emph{never} chooses a candidate $y$ that is dominated by some other candidate $x$ by every voter---then the voting rule must be a probability distribution over dictatorial voting rules.

The notion of strategy-proofness, however, is quite strong. It requires voters to truthfully report their preferences, \emph{no matter} what preferences the other voters have (and in particular, even if the voter \emph{knows} exactly the preferences of everyone else). One may thus hope that these impossibility results can be circumvented by relaxing this requirement. For instance, for the case of ``large-scale'' voting, it makes sense to assume that each voter has some belief about the preferences of the other voters, and additionally that these preferences are independent and identically distributed (i.i.d.)---we refer to such a notion as \emph{strategy-proofness w.r.t.~(with respect to) i.i.d.~beliefs}. Unfortunately, this weakening does not make things much better: A recent result by McLennan \cite{McL11} shows that if an \emph{anonymous}\footnote{A voting rule is anonymous if it does not depend on the identity of the voters.} voting rule (with at least three candidates) is strategy-proof w.r.t.~\emph{all} i.i.d.~beliefs and is also Pareto efficient, then the voting rule must be a random dictatorship---that is, a uniformly random voter's top choice is chosen as the winner. 

Pareto efficiency, however, is a strong condition. When dealing with \emph{randomized} voting rules, a natural relaxation (borrowing from the literature on randomized algorithms or cryptography) would be to allow Pareto efficiency to be violated with some ``tiny'' probability $\eps$. For instance, if this probability $\eps$ is exponentially small in the number of voters, then it seems that the voting rule is still perfectly reasonable. Unfortunately, we show that relaxing Pareto efficiency to $\eps$-Pareto efficiency (where Pareto efficiency can be violated with probability $\eps$) does not help, even for rather large values of $\eps$, and even for a significantly weaker notion of efficiency which we call $\eps$-\emph{super-weak unanimity}: $\eps$-super-weak unanimity requires that for every candidate $x$, there exists \emph{some} preference profile with $x$ being the top choice of every voter, such that the voting rule chooses $x$ with probability at least $1-\eps$.

\begin{theorem}[Informal]
Suppose there are at least three candidates. Let $v$ be any anonymous randomized voting rule that is strategy-proof w.r.t.~all i.i.d.~beliefs, and satisfies $\eps$-super-weak unanimity. Then, $v$ is $O(\eps)$-close to the random dictatorship voting rule.
\end{theorem}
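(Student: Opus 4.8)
The plan is to build on McLennan's analysis of strategy-proofness w.r.t.\ i.i.d.\ beliefs, to extract from it a \emph{robust} (quantitative) structural description of $v$, and then to push the single-profile guarantee supplied by $\eps$-super-weak unanimity through that structure to obtain $O(\eps)$-closeness to random dictatorship. Throughout, write $\mathsf{RD}$ for the random dictatorship rule; by anonymity $\mathsf{RD}(P)$ assigns to each candidate $c$ the fraction of voters whose top choice is $c$, and $v$ itself is a function only of the multiset of reported preferences, so everything reduces to understanding how $v$ behaves as this multiset changes by one voter.

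The first step is to isolate the consequences of strategy-proofness w.r.t.\ \emph{all} i.i.d.\ beliefs that survive with slack. For a voter with true preference $a$ contemplating reporting $a'$ against an i.i.d.\ belief $\mu$ on the other $n-1$ voters, strategy-proofness gives $\E_{P_{-i}\sim\mu^{n-1}}[v(a,P_{-i})] \succeq_a \E_{P_{-i}\sim\mu^{n-1}}[v(a',P_{-i})]$, where $\succeq_a$ denotes first-order stochastic dominance with respect to the linear order $a$. Taking $\mu$ to be a point mass $\delta_c$ yields $v(a,c^{n-1})\succeq_a v(a',c^{n-1})$ for all preferences $a,a',c$; taking $\mu = p\,\delta_b+(1-p)\,\delta_c$ yields, for every $p\in[0,1]$, a dominance inequality among the profiles $b^{k}c^{\,n-1-k}$ whose coefficients are the Bernstein polynomials $\binom{n-1}{k}p^{k}(1-p)^{n-1-k}$, and inspecting the endpoints $p\to 0,1$ together with the low-order behaviour extracts ``single-voter'' monotonicity conditions relating $v$ at profiles that differ in one report. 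These are exactly the local strategy-proofness conditions that, in McLennan's exact setting, force a random dictatorship once efficiency is imposed; what I want here are the same implications carrying an additive $O(\eps)$ error.

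The second step uses $\eps$-super-weak unanimity. For each candidate $x$ let $P_x$ be the promised profile in which every voter ranks $x$ first and $v(P_x)(x)\ge 1-\eps$; since $\mathsf{RD}(P_x)=\delta_x$, this already says that $v$ is $\eps$-close to $\mathsf{RD}$ at $P_x$. I would then propagate this bound through a robust version of McLennan's chain of implications: one natural sequence of milestones is first to establish that $v$ is $O(\eps)$-Pareto efficient (it places only $O(\eps)$ mass on a candidate dominated for every voter), by combining the profiles $P_x$ with the single-voter dominance conditions to ``push down'' dominated candidates; then that $v$ is $O(\eps)$-close to a tops-only rule; and finally that this tops-only behaviour, under anonymity, forces the per-voter weights to be the uniform $1/n$ of $\mathsf{RD}$ and forces the remaining ``pairwise'' degrees of freedom (the analogue of Gibbard's duple rules) to carry total weight $O(\eps)$.

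The step I expect to be the main obstacle is precisely this quantitative propagation. McLennan's exact argument chains together many profiles and many applications of the dominance inequalities, and a naive robust version would lose a factor growing with the number of voters (or candidates), yielding only $O(n\eps)$-closeness rather than $O(\eps)$. I expect the fix to be to phrase the structural lemma \emph{globally} rather than profile-by-profile: the i.i.d.\ hypothesis lets one average a single stochastic-dominance constraint over many background profiles simultaneously, so that an $\eps$-sized violation at one profile should already pin down the total weight of the non-random-dictatorship part of $v$ to be $O(\eps)$, a bound that then transfers to every profile at once. Making this ``single violation $\Rightarrow$ small global defect'' step go through with the right constant, rather than with a constant that degrades in $n$, is the crux of the argument.
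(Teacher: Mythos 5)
Your plan is the right high-level strategy---robustify McLennan's argument and push the single-profile guarantee of $\eps$-super-weak unanimity through it---but it stops exactly where the actual difficulty begins, and the fix you sketch for that difficulty is not the one that works. You correctly predict that a naive profile-by-profile robustification loses a factor of $n$, but your proposed remedy (``average a single stochastic-dominance constraint over many background profiles so that one $\eps$-violation pins down a global $O(\eps)$ defect'') is left entirely unexecuted, and you yourself flag it as the crux. The paper does not do anything of this global-averaging sort. It extracts from weak strategy-proofness two concrete local properties, \emph{pairwise responsiveness} and \emph{pairwise isolation} (imported from McLennan/Gibbard), and then avoids the $O(\eps n)$ blowup with two specific devices you do not have. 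First, in proving that $v$ is close to tops-only, each adjacent-candidate swap's effect on $v(x,\cdot)$ is transferred, via pairwise isolation, to an auxiliary profile in which the relevant candidate is unanimously ranked first; the sum over the whole swap sequence then telescopes to a difference of two probabilities that $\eps$-strong unanimity pins to within $\eps$, so the error is $\eps$ per candidate block (total $m\eps$), not $\eps$ per swap. Second, after defining $v'(x,j)$ (the approximate selection probability when $j$ voters rank $x$ first) and showing approximate interchangeability of candidates and approximate additivity of $v'(x,\cdot)$, the bound $|v'(x,j)-j/n|=O(\eps)$ is obtained not by summing $n$ increments but by a maximal-deviation doubling argument: if $q$ maximizes $r_q=|v'(x,q)-q/n|$, approximate linearity (plus the reflection $q\mapsto n-q$ when $q>n/2$) forces $r_{2q}\geq 2r_q-O(\eps)$, which contradicts maximality unless $r_q=O(\eps)$. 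Without these (or equivalent) mechanisms, your outline only delivers the $O(\eps n)$ bound you acknowledge is insufficient.

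Two of your intermediate milestones are also off the route and not justified. You propose first establishing $O(\eps)$-Pareto efficiency; the paper deliberately avoids any Pareto-type condition, instead showing (via pairwise responsiveness: swaps not involving the unanimous top candidate leave its probability unchanged) that $\eps$-super-weak unanimity already implies $\eps$-strong unanimity, which is all that is needed. Deriving $\eps$-Pareto efficiency directly from super-weak unanimity before knowing closeness to random dictatorship is not obviously possible and, in any case, unnecessary. You also invoke ``the analogue of Gibbard's duple rules carrying total weight $O(\eps)$,'' but Gibbard's unilateral/duple decomposition is a theorem about fully strategy-proof rules; it is not available under strategy-proofness w.r.t.\ i.i.d.\ beliefs, and neither McLennan's proof nor this paper's proof uses such a decomposition---only Gibbard's local lemmas. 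So the proposal, as written, has a genuine gap at the quantitative heart of the theorem and leans on structural steps that are not established in this weaker setting.
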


Thus, even for an extremely weak notion of what it means to be a ``reasonable'' voting rule, strategy-proofness w.r.t.~all i.i.d.~beliefs cannot be achieved.

\section{Preliminaries}
\label{sec:preliminaries}

Given an integer $k \in \N$, let $[k] = \{1, \ldots, k\}$. Let $\C$ be any finite set of \emph{candidates} (or \emph{alternatives}). A \emph{preference ordering} on $\C$ is a strict total order on the set of candidates $\C$; let $\P$ denote the set of all preference orderings on $\C$. Given a subset $A \subseteq \C$ of candidates, let $L(A)$ denote the set of preference orderings (i.e., strict total orders) on $A$. Given a preference ordering $P$ and a pair of candidates $x,y \in \C$, we shall write $x \pref{P} y$ to mean that \emph{$x$ is (strictly) preferred over $y$ in $P$}, i.e., $x$ is ranked higher than $y$ according to $P$. Given a preference ordering $P$, let $top(P)$ denote the highest-ranked candidate according to $P$, i.e., $top(P)$ is the candidate $x$ in $\C$ such that $x \pref{P} y$ for every $y \in \C \setminus \{x\}$. 

Throughout this paper, we will use $n$ to denote the number of voters, and $m$ to denote the number of candidates in $\C$; we will often treat $m$ as a constant. A \emph{preference profile} is a vector of length $n$ whose components are preference orderings in $\P$; that is, a preference profile is simply an element of $\P^n$ which specifies the (submitted) preference orderings of $n$ voters. 

Given a finite set $S$, let $\Delta(S)$ denote the set of all probability distributions over $S$. A (randomized) \emph{voting rule} is a function $v: \P^n \to \Delta(\C)$ (or $v: \P^* \to \Delta(\C)$ if $v$ works for any number of voters) that maps preference profiles to probability distributions over candidates; intuitively, $v(\vec{P})$ is a distribution over $\C$ that specifies the probability that each candidate is selected when the submitted votes form the preference profile $\vec{P}$. A voting rule $v$ is said to be \emph{deterministic} if for every preference profile $\vec{P}$, the distribution $v(\vec{P})$ assigns probability 1 to some candidate. A voting rule $v$ is said to be \emph{anonymous} if $v$ does not depend on the order in which the preference orderings appear in the input, i.e., $v(P_1, \ldots, P_n) = v(P_{\sigma(1)}, \ldots, P_{\sigma(n)})$ for every preference profile $(P_1, \ldots, P_n) \in \P^n$ and every permutation $\sigma: [n] \to [n]$. In this paper, we will only consider anonymous voting rules; most common voting rules are indeed anonymous, and one can argue that anonymous voting rules are more fair and democratic than non-anonymous ones. 

Given a (randomized) voting rule $v: \P^n \to \Delta(\C)$, a candidate $x \in \C$, and a preference profile $\vec{P}$, let $v(x,\vec{P})$ be the probability mass assigned to $x$ by the distribution $v(\vec{P})$; we also refer to $v(x,\vec{P})$ as the \emph{selection probability of $x$ with respect to $v$ and $\vec{P}$}, since $v(x,\vec{P})$ is the probability that candidate $x$ is selected by the voting rule $v$ when the input preference profile is $\vec{P}$. A \emph{utility function} is a function $u: \C \to [0,1]$ that assigns a real number in $[0,1]$ to each candidate in $\C$.\footnote{It is not important that the codomain of the utility function $u$ is $[0,1]$; as long as the codomain is bounded, the results of this paper still hold with minor modifications.} Given a preference ordering $P$ and a utility function $u$, we say that $u$ is \emph{consistent with $P$} if for every pair of candidates $x,y \in \C$, we have $u(x) > u(y)$ if and only if $x \pref{P} y$.

A voting rule is \emph{Pareto efficient} if it never chooses a Pareto dominated candidate, i.e., a candidate $y$ such that all the voters prefer $x$ over $y$ for some candidate $x$. A slight relaxation of Pareto efficiency is \emph{$\eps$-Pareto efficiency}, where we allow the voting rule to choose a Pareto dominated candidate with probability at most $\eps$.

\begin{definition}[$\eps$-Pareto efficiency]
A voting rule $v: \P^n \to \Delta(\C)$ is \emph{$\eps$-Pareto efficient} if for every pair of candidates $x,y \in \C$ and every preference profile $\vec{P} = (P_1, \ldots, P_n) \in \P^n$ such that $x \pref{P_i} y$ for every $i \in [n]$, we have $v(y,\vec{P}) \leq \eps$. 
\end{definition}

The \emph{random dictatorship} voting rule is the voting rule $v_{dict}$ that chooses a voter uniformly at random and then chooses her top choice, i.e., for every preference profile $\vec{P} = (P_1, \ldots, P_n) \in \P^n$ and every candidate $x \in \C$, we have $v_{dict}(x,\vec{P}) = \frac{|\{i \in [n] : top(P_i) = x\}|}{n}$. 

See \ref{app:Gibbard-Satterthwaite} for background information on the Gibbard-Satterthwaite theorem \cite{Gib73,Sat75} and Gibbard's generalization of the Gibbard-Satterthwaite theorem to randomized voting rules \cite{Gib77}.

\subsection{Strategy-Proofness with respect to a Set of Beliefs} 

Gibbard's generalization \cite{Gib77} of the Gibbard-Satterthwaite theorem shows that when there are at least three candidates, we cannot even construct good \emph{randomized} voting rules that are strategy-proof. Given this impossibility result, let us consider relaxed notions of strategy-proofness. We observe that strategy-proofness requires that no voter would want to lie about her true preference even if the voter \emph{knows} the submitted preferences of \emph{all} the other voters. However, in many realistic scenarios, a voter is \emph{uncertain} about how other voters will vote, and she would only lie if she \emph{believes} that she can gain utility in expectation by lying. As a result, we consider a relaxed notion of strategy-proofness where we consider the voter's \emph{belief} of how the other voters will vote. The standard notion of strategy-proofness requires that no voter would want to lie regardless of what her belief is. To weaken the notion of strategy-proof, one can require that no voter would want to lie as long as her belief belongs in a certain set of beliefs. Let us now move to formalizing these notions.

In this paper, we will only consider beliefs that are \emph{i.i.d.}~(independent and identically distributed), meaning that for each belief, the other voters' preference orderings are sampled independently from some distribution $\phi$ over preference orderings. Thus, for simplicity, we define a \emph{belief} to be a probability distribution over the set $\P$ of preference orderings, representing a voter's belief that each of the other voters will have a preference ordering drawn independently from this distribution. We now state the definition of \emph{strategy-proof with respect to a set of beliefs}.

\begin{definition}[Strategy-proof w.r.t.~a set $\Phi$ of beliefs]
A voting rule $v: \P^n \to \Delta(\C)$ is \emph{strategy-proof w.r.t.~a set $\Phi$ of beliefs} if for every $i \in [n]$, every pair of preference orderings $P_i, P_i' \in \P$, every belief $\phi \in \Phi$, and every utility function $u_i$ that is consistent with $P_i$, we have
\begin{align*}
  \E[u_i(v(\vec{P}_{-i},P_i))] \geq \E[u_i(v(\vec{P}_{-i},P_i'))],
\end{align*}
where $\vec{P}_{-i} \sim \phi^{n-1}$.
\end{definition}

\section{Impossibility of Strategy-Proofness w.r.t.~all i.i.d.~Beliefs}
\label{sec:impossibility}

In this section, we prove our impossibility result that says that if there are at least three candidates, then it is not possible to construct a voting rule that is strategy-proof for all i.i.d.~beliefs and satisfies $\eps$-super-weak unanimity, unless the voting rule is $O(\eps)$-close to being the random dictatorship voting rule. We begin with some definitions. We call a voting rule \emph{weakly strategy-proof} if it is strategy-proof with respect to the set of all i.i.d.~beliefs.

\begin{definition}[Weakly strategy-proof]
A voting rule $v: \P^n \to \Delta(\C)$ is \emph{weakly strategy-proof} if $v$ is strategy proof with respect to the set of all i.i.d.~beliefs.
\end{definition}

Suppose all the voters have the same top choice, say candidate $x$; then, we would expect the voting rule to choose $x$ as the winner. We call this property \emph{strong unanimity}; again, we slightly relax this property to \emph{$\eps$-strong unanimity}, where we allow the voting rule to choose the common top candidate with probability at least $1-\eps$ instead of probability $1$. 

\begin{definition}[$\eps$-strong unanimity]
A voting rule $v: \P^n \to \Delta(\C)$ satisfies \emph{$\eps$-strong unanimity} if for every candidate $x \in \C$ and every preference profile $\vec{P} = (P_1, \ldots, P_n) \in \P^n$ such that $top(P_i) = x$ for every $i \in [n]$, we have $v(x,\vec{P}) \geq 1 - \eps$. 
\end{definition}

It is easy to see that strong unanimity is weaker than Pareto efficiency (modulo a factor of $m$ for the $\eps$ version of the properties). Now, suppose all the voters have the exact same preference ordering $P$; then, we would expect the voting rule to choose the top candidate of $P$. We call this property \emph{weak unanimity}; again, we state an $\eps$ version of the definition.

\begin{definition}[$\eps$-weak unanimity]
A voting rule $v: \P^n \to \Delta(\C)$ satisfies \emph{$\eps$-weak unanimity} if for every preference ordering $P \in \P$ and every preference profile $\vec{P} = (P_1, \ldots, P_n) \in \P^n$ such that $P_i = P$ for every $i \in [n]$, we have $v(top(P),\vec{P}) \geq 1 - \eps$. 
\end{definition}

It is clear that $\eps$-weak unanimity is weaker than $\eps$-strong unanimity. We finally define \emph{$\eps$-super-weak unanimity}, which is \emph{even weaker} than $\eps$-weak unanimity, and requires that for every candidate $x \in \C$, there exists some preference profile $\vec{P}$ with $x$ at the top of every preference ordering in $\vec{P}$, such that the voting rule on $\vec{P}$ will choose $x$ as the winner with probability at least $1-\eps$. 

\begin{definition}[$\eps$-super-weak unanimity]
A voting rule $v: \P^n \to \Delta(\C)$ satisfies \emph{$\eps$-super-weak unanimity} if for every candidate $x \in C$, there exists a preference profile $\vec{P} = (P_1, \ldots, P_n) \in \P^n$ with $top(P_i) = x$ for every $i \in [n]$, such that $v(x, \vec{P}) \geq 1 - \eps$. 
\end{definition}

$\eps$-super-weak unanimity is a very weak property that all reasonable voting rules should have. We now define what it means for two voting rules to be \emph{$\eps$-close} to each other. 

\begin{definition}
Let $v,v': \P^n \to \Delta(\C)$ be two randomized voting rules. We say that $v$ is \emph{$\eps$-close} to $v'$ if for every preference profile $\vec{P}$ and every candidate $x \in \C$, we have $|v(x,\vec{P}) - v'(x,\vec{P})| \leq \eps$. 
\end{definition}

We now formally state our theorem.

\begin{theorem}
\label{thm:impossibility}
Suppose there are at least three candidates in $\C$, i.e., $|\C| \geq 3$. Let $v: \P^n \to \Delta(\C)$ be any anonymous randomized voting rule that is weakly strategy-proof and satisfies $\eps$-super-weak unanimity. Then, $v$ is $O(\eps)$-close to the random dictatorship voting rule.
\end{theorem}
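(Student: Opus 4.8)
The plan is to exploit weak strategy-proofness by instantiating the definition at carefully chosen i.i.d.~beliefs. Against a point-mass belief $\phi=\delta_Q$ (all other voters submit the same ordering $Q$), strategy-proofness says that for a voter with true ordering $P_i$, the distribution $v(\vec{P}_{-i}=Q^{n-1},P_i)$ first-order stochastically dominates $v(Q^{n-1},P_i')$ with respect to $P_i$, for every alternative report $P_i'$. Applying this both to $(P_i,P_i')$ and to the reverse pair, one gets the familiar conclusion that if $P_i,P_i'$ differ only by transposing two $P_i$-adjacent candidates $a,b$, then $v(Q^{n-1},P_i)$ and $v(Q^{n-1},P_i')$ agree on every candidate other than $a,b$, and the mass on the higher-ranked of $a,b$ weakly increases. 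Against a two-point perturbed belief $\phi=(1-t)\delta_Q+t\delta_R$, the expected-utility inequality becomes a polynomial inequality in $t$; comparing the Taylor coefficients of the two sides at $t=0$ (for $t$ taken on the scale $\Theta(1/n)$, so that the low-order terms are not exponentially suppressed) transfers the same ``only two masses move'' and dominance constraints from the all-$Q$ configuration to configurations in which a bounded number of the other voters deviate from $Q$. This is the mechanism by which a statement about one profile propagates to neighbouring profiles.

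With these tools, the argument has three stages. \emph{Stage 1 (super-weak $\Rightarrow$ approximate strong unanimity):} starting from the profile $\vec{P}^{x}$ guaranteed by $\eps$-super-weak unanimity, on which $v$ selects $x$ with probability $\geq 1-\eps$, use the transfer mechanism along changes of single voters (always keeping $x$ at the top) to deduce $v(x,\vec{Q})\geq 1-O(\eps)$ for every profile $\vec{Q}$ in which all voters rank $x$ first, i.e.\ $O(\eps)$-strong unanimity. \emph{Stage 2 (near tops-only):} using strong unanimity together with the ``only two masses move'' constraints, show that changing one voter's ordering without changing its top alters every selection probability by only $O(\eps)$, so that $v$ is $O(\eps)$-close to some rule $\wh v$ that depends on the input only through the vector of top choices. \emph{Stage 3 (pin down $\wh v$):} for a profile whose tops are split between two candidates $x,y$ — say $k$ voters with top $x$ and $n-k$ with top $y$; here the hypothesis $|\C|\geq 3$ is used to have a spare candidate available when building the needed orderings — move one voter at a time between an $x$-on-top and a $y$-on-top ordering and track the mass on $x$: each move changes only $v(x,\cdot)$ and $v(y,\cdot)$, by equal and opposite amounts. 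Combining this with the boundary values forced by $O(\eps)$-strong unanimity (mass $\geq 1-O(\eps)$ on $x$ when all tops are $x$, on $y$ when all tops are $y$) and averaging over which voter is moved (using anonymity), conclude $v(x,\cdot)=k/n\pm O(\eps)$. A short induction on the number of distinct top choices then extends this to arbitrary profiles, giving $v(x,\vec{P})=\frac{|\{i:top(P_i)=x\}|}{n}\pm O(\eps)$, i.e.\ $O(\eps)$-closeness to $v_{dict}$.

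The main obstacle is controlling the accumulation of error. Each use of the transfer mechanism or of a single-voter move costs an additive $O(\eps)$, and the obvious chain connecting two profiles has length $\Theta(n)$, which would only give $O(n\eps)$-closeness; the heart of the proof is to reorganize Stages 1--3 so that every target estimate is reached by a chain of $O(1)$ (or at worst $O(m)$) steps rather than $O(n)$ — e.g.\ by first proving the stronger statement directly for ``few-deviation'' profiles and bootstrapping, and by averaging over the perturbed voter so that per-step errors cancel rather than add. A secondary technical point is making the Taylor-coefficient comparison in $t$ rigorous when the relevant boundary distributions are only $O(\eps)$-close rather than exactly equal; this is handled by working at a fixed perturbation scale $t=\Theta(1/n)$ and absorbing the resulting slack into the $O(\eps)$ error term.
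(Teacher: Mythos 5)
Your skeleton matches the paper's (localization properties from i.i.d.\ strategy-proofness, super-weak $\Rightarrow$ strong unanimity, an approximately tops-only reduction, then pinning the tops-only function to $j/n$), but the proposal is missing precisely the two arguments that make the final bound $O(\eps)$ rather than $O(n\eps)$, and at those points it only gestures (``reorganize,'' ``bootstrap,'' ``average so errors cancel'') without a mechanism. For the tops-only step, your per-voter bound (changing one voter's ordering while fixing its top moves probabilities by $O(\eps)$) chained over $n$ voters gives $O(n\eps)$; the paper's Lemma~\ref{lem:closeToTopsOnly} avoids this by grouping voters according to their top candidate and, using pairwise isolation, \emph{transplanting} the effect of every adjacent swap onto an auxiliary profile in which that group's top candidate $a_{j+1}$ heads every ordering; there the swap-by-swap changes telescope exactly, and the two endpoint probabilities are each bounded via $\eps$-strong unanimity, so the whole group costs $\eps$ and the total is $m\eps$, not $n\eps$. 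Nothing in your sketch plays this role. For the final step, the constraints you actually list in Stage 3 (each single-voter move shifts mass only between $x$ and $y$ by equal and opposite amounts, boundary values $\geq 1-O(\eps)$ at the unanimous profiles, anonymity) do \emph{not} force $v(x,\cdot)=k/n\pm O(\eps)$: a majority-style rule along the $x$/$y$ line (probability $\approx 0$ of $x$ for $k<n/2$, $\approx 1$ for $k>n/2$) satisfies every one of them, and ``averaging over which voter is moved'' adds nothing beyond anonymity. The paper needs two further ingredients here: approximate candidate-anonymity and translation-invariance of the increments $v'(x,j+\ell)-v'(x,j)$ (Claims~\ref{claim:candidatesAreAnonymous} and~\ref{claim:slidingWindow}, proved with the spare third candidate and pairwise isolation), and then an extremal doubling argument (Claim~\ref{claim:closeToRandomDictatorship}: take $q$ maximizing $|v'(x,q)-q/n|$ and double it, or reflect $q\mapsto n-q$ and double) so that the per-step $O(m\eps)$ errors are used only $O(1)$ times. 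These are the heart of the proof and are absent from your plan.

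A secondary gap is your re-derivation of the localization properties from point-mass and two-point beliefs. The two-sided dominance argument at $\phi=\delta_Q$ is fine, but ``comparing Taylor coefficients at $t=0$'' of an \emph{inequality} is not valid unless the lower-order terms vanish, the two-point perturbation only reaches profiles in which the other voters use at most two distinct orderings, and ``working at scale $t=\Theta(1/n)$ and absorbing slack'' is not a rigorous substitute. The clean route is either to note that for adjacent manipulations the two-sided dominance yields an \emph{equality} of belief-averaged probabilities off the swapped pair, valid for every i.i.d.\ belief, so that polynomial identity (coefficient matching over general finite-support beliefs, using anonymity) gives the profile-by-profile statement --- or simply to import these facts, as the paper does in Lemma~\ref{lem:basicProperties} from \cite{McL11} and \cite{Gib77}.
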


Theorem \ref{thm:impossibility} is similar to McLennan's impossibility result \cite{McL11} except that we assume $\eps$-super-weak unanimity instead of Pareto efficiency (which is stronger than $\eps$-super-weak unanimity), and our conclusion is that $v$ is $O(\eps)$-close to random dictatorship instead of being equal to random dictatorship. In Theorem \ref{thm:impossibility}, if we consider the special case where $\eps = 0$, we get a theorem that implies McLennan's impossibility result.

Let us briefly mention some aspects of our proof. Our proof uses some tools from McLennan's impossibility result \cite{McL11} and Gibbard's generalization of the Gibbard-Satterthwaite theorem \cite{Gib77}. However, our result does not follow from generalizing McLennan's proof. For example, using McLennan's proof, it is not clear how one can weaken the assumption of Pareto efficiency to $\eps$-Pareto efficiency and still show that the voting rule is close to random dictatorship. By adding ``error terms'' at various places in McLennan's proof, it is not too difficult to show that the voting rule $v$ is $O(\eps n)$-close to random dictatorship. However, if $\eps n$ is large (e.g., $\eps n = \Omega(1)$), then the result is not meaningful. In particular, such a result does not prevent the possibility of constructing a voting rule that is $1/n$-Pareto efficient and weakly strategy-proof. Our impossibility result shows that the voting rule $v$ is $O(\eps)$-close to random dictatorship, as opposed to $O(\eps n)$-close; in order to prove this, we needed to use different analyses and go through substantially more work. In our proof, we do not use an $\eps$-Pareto efficiency assumption because $\eps$-super-weak unanimity is already sufficient; in particular, we show that if the voting rule is weakly strategy-proof, then $\eps$-super-weak unanimity implies $\eps$-strong unanimity, and $\eps$-strong unanimity is what we need in our proof. Thus, we only need to assume that the voting rule satisfies $\eps$-super-weak unanimity instead of the stronger $\eps$-Pareto efficiency or $\eps$-strong unanimity.

We will now prove Theorem \ref{thm:impossibility}. We will prove a sequence of lemmas and claims that describe properties that the voting rule $v$ must have, which will be used to show that $v$ is $O(\eps)$-close to the random dictatorship voting rule.

We first establish some notation and terminology that will be used later; most of the notation and terminology comes from \cite{Gib77}, which is Gibbard's generalization of the Gibbard-Satterthwaite theorem to randomized voting rules. Given a preference ordering $P$ and a pair of candidates $x, y \in \C$, we shall write $x \pref{P!} y$ to mean that $x$ is directly on top of $y$ in the preference ordering $P$, i.e., $x \pref{P} y$ and for every candidate $z \in \C \setminus \{x,y\}$, we have $z \pref{P} x$ if and only if $z \pref{P} y$. Given a preference ordering $P$ and a candidate $y \in \C$, let $P^y$ be the preference ordering $P$ except that $y$ is swapped with the candidate directly above $y$, if such a candidate exists. 

A voting rule $v: \P^n \to \Delta(\C)$ is said to be \emph{pairwise responsive} if for every preference profile $\vec{P} = (P_1, \ldots, P_n) \in \P^n$, every $i \in [n]$, and every pair of candidates $x,y \in \C$ such that $x \pref{P_i}! y$, we have $v(z,(\vec{P}_{-i},{P_i}^y)) = v(z,\vec{P})$ for every candidate $z \in \C \setminus \{x,y\}$. Intuitively, a voting rule is pairwise responsive if for any preference profile, if we take a voter's preference ordering and swap two adjacent candidates, then only the selection probabilities of the swapped candidates can possibly change; the selection probabilities of the other candidates (not involved in the swap) remain the same. This implies that if a voting rule is pairwise responsive and we swap two adjacent candidates in a voter's preference ordering, then the change in the selection probability of one of the swapped candidates is the exact opposite (i.e., the additive inverse) of the change in the selection probability of the other swapped candidate; this is because the selection probabilities of all the candidates must sum up to 1. 

A voting rule $v: \P^n \to \Delta(\C)$ is said to be \emph{pairwise isolated} if for every $x, y \in \C$, every $i \in [n]$, and every $\vec{P} = (P_1, \ldots, P_n), \vec{P'} = (P'_1, \ldots, P'_n) \in \P^n$ such that $x \pref{P_i}! y$, $P'_i = P_i$, and the relative ordering of $x$ and $y$ for $P_j$ is the same as that for $P'_j$ for every $j \in [n]$, we have $v(y,(\vec{P'}_{-i},{P'_i}^y)) - v(y,\vec{P'}) = v(y,(\vec{P}_{-i},{P_i}^y)) - v(y,\vec{P})$. Intuitively, a voting rule is pairwise isolated if for any preference profile, if we take a voter $i$'s preference ordering and swap two adjacent candidates $x$ and $y$ with $x$ being on top of $y$, then the change in the selection probability of $y$ only depends on voter $i$'s preference ordering and the relative ordering of $x$ and $y$ in the other voters' preference orderings.

It is already known that anonymous randomized weakly strategy-proof voting rules are both pairwise responsive and pairwise isolated (see \cite{McL11}). 

\begin{lemma}[\cite{McL11}]
\label{lem:basicProperties}
Let $v: \P^n \to \Delta(\C)$ be any anonymous randomized voting rule that is weakly strategy-proof. Then, $v$ is pairwise responsive and pairwise isolated. 
\end{lemma}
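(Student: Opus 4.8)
The plan is to derive both properties from the defining inequality of strategy-proofness w.r.t.\ i.i.d.\ beliefs by choosing the belief $\phi$ cleverly and then exploiting anonymity. The key observation is that if $\phi$ is supported on a single preference ordering $Q$, then $\vec{P}_{-i} \sim \phi^{n-1}$ is deterministically the profile $(Q, \ldots, Q)$, so the strategy-proofness inequality becomes a pointwise statement about $v$ applied to profiles where all but one coordinate equal $Q$. More generally, taking $\phi$ to be a convex combination of point masses on orderings $Q_1, \ldots, Q_k$ lets us average such pointwise statements, and since this must hold for \emph{every} such $\phi$ simultaneously, we can extract a large family of linear constraints on the selection probabilities.

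First I would prove \textbf{pairwise responsiveness}. Fix $i$, a profile $\vec{P}$, and candidates $x, y$ with $x \pref{P_i}! y$. Take a utility function $u_i$ consistent with $P_i$; by strategy-proofness (with $\phi$ a point mass on a single ordering, repeated $n-1$ times, combined with anonymity to place the ``other voters'' however we need), both directions of swapping $y$ up past $x$ in voter $i$'s ballot give us $\E[u_i(v(\vec{P}_{-i}, P_i))] \geq \E[u_i(v(\vec{P}_{-i}, {P_i}^y))]$ and the reverse, hence equality. Writing out $\E[u_i(v(\cdot))] = \sum_{z} u_i(z) v(z, \cdot)$, the difference between the two profiles is $u_i(x)\bigl(v(x,\vec{P}) - v(x,(\vec{P}_{-i},{P_i}^y))\bigr) + u_i(y)\bigl(v(y,\vec{P}) - v(y,(\vec{P}_{-i},{P_i}^y))\bigr) + \sum_{z \neq x,y} u_i(z)\bigl(v(z,\vec{P}) - v(z,(\vec{P}_{-i},{P_i}^y))\bigr)$, which must vanish. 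Since $x \pref{P_i}! y$, the ordering ${P_i}^y$ differs from $P_i$ only in the $x$--$y$ transposition, so we may range $u_i$ over all utilities consistent with $P_i$: the values $u_i(z)$ for $z \neq x,y$ can be perturbed freely (keeping consistency) without changing $u_i(x), u_i(y)$, which forces each coefficient $v(z,\vec{P}) - v(z,(\vec{P}_{-i},{P_i}^y)) = 0$ for $z \neq x,y$. That is exactly pairwise responsiveness.

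Next, \textbf{pairwise isolation}. Here the point is that once we know $v$ is pairwise responsive, the quantity $\delta := v(y,(\vec{P}_{-i},{P_i}^y)) - v(y,\vec{P}) = -\bigl(v(x,(\vec{P}_{-i},{P_i}^y)) - v(x,\vec{P})\bigr)$ captures the entire effect of the swap. I would apply strategy-proofness with a belief $\phi$ that is a mixture over point masses on the orderings appearing among $P_1, \ldots, P_n$ (and ${P'_j}$'s), so that $\vec{P}_{-i} \sim \phi^{n-1}$ realizes, with positive probability and by anonymity, any multiset of the other voters' ballots consistent with the prescribed relative orderings of $x$ and $y$. Comparing the inequality for the profile built from $\vec{P}$ with the one built from $\vec{P'}$ — both of which must be tight, by the same two-directional argument — and using that only the $x$--$y$ relative orders and voter $i$'s ballot are pinned down, one concludes that $\delta$ depends only on $P_i$ and the vector of $x$--$y$ relative orderings of the other voters. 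The equation $v(y,(\vec{P'}_{-i},{P'_i}^y)) - v(y,\vec{P'}) = v(y,(\vec{P}_{-i},{P_i}^y)) - v(y,\vec{P})$ then follows.

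The main obstacle I anticipate is the bookkeeping in the pairwise isolation step: one must carefully design the mixture belief $\phi$ so that the profile over the ``other'' $n-1$ voters realizes the desired multiset of ballots with nonzero probability, extract the resulting linear identity among expected utilities across two different supports of $\phi$, and argue that the only surviving dependence is on $P_i$ and the $x$--$y$ relative orderings — isolating $\delta$ requires combining several tight inequalities and canceling the contributions of candidates other than $x, y$ using pairwise responsiveness. The responsiveness step itself is comparatively routine linear algebra over the simplex of consistent utility functions. Since this lemma is attributed to \cite{McL11}, I would in practice cite that argument and only sketch the above; but the sketch above is the route I would take to reconstruct it.
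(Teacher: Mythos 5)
The paper does not reprove this lemma; it simply cites Lemmas 1--3 of \cite{McL11} (which in turn lean on Lemmas 1 and 3 of \cite{Gib77}), so your attempt is a reconstruction, and it contains a genuine gap at exactly the step that makes the i.i.d.-beliefs setting nontrivial. Weak strategy-proofness only constrains \emph{expectations} of the form $\E_{\vec{P}_{-i}\sim\phi^{n-1}}[u_i(v(\vec{P}_{-i},\cdot))]$. Your pairwise-responsiveness argument takes $\phi$ to be a point mass on a single ordering $Q$ ``combined with anonymity to place the other voters however we need,'' but a point mass only yields the pointwise statement for profiles in which all $n-1$ other voters submit the \emph{same} ballot $Q$; anonymity is a symmetry of $v$ and cannot turn that into a statement about a profile whose other ballots are distinct. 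For a general $\vec{P}$ you must take $\phi$ supported on the (distinct) orderings appearing in $\vec{P}_{-i}$, observe that the expected-utility difference is a polynomial in the mixture weights which is identically zero (resp.\ nonnegative) on an open subset of the simplex, and then match coefficients of the linearly independent monomials -- indexed, via anonymity, by multisets of ballots -- to recover the profile-by-profile identity. This coefficient-extraction step is the substance of McLennan's Lemma 1, and your sketch gestures at ``a large family of linear constraints'' without supplying the mechanism; in the pairwise-isolation paragraph you likewise assert that realizing the desired multiset ``with positive probability'' suffices, but positive probability only places that profile inside an average, it does not isolate it.

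A secondary, fixable lapse: you claim both inequalities $\E[u_i(v(\vec{P}_{-i},P_i))]\ge\E[u_i(v(\vec{P}_{-i},{P_i}^y))]$ and its reverse ``hence equality'' for the same $u_i$. The reverse inequality is only available for utilities consistent with ${P_i}^y$, which is disjoint from the class consistent with $P_i$; the standard repair (as in Gibbard's localization lemma) is to take limits with $u_i(x)-u_i(y)\to 0$, where the closures of the two classes meet, and only then vary $u_i(z)$ for $z\neq x,y$ to kill the coefficients $v(z,\vec{P})-v(z,(\vec{P}_{-i},{P_i}^y))$ and to force the $x,y$ changes to cancel. With that limit argument and the polynomial/coefficient step above, your outline would match the cited route; without them, the derivation does not go through for arbitrary profiles.
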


\begin{proof}
This immediately follows from Lemmas 1, 2, and 3 in \cite{McL11}, where Lemmas 2 and 3 in \cite{McL11} follow immediately from Lemmas 1 and 3 in \cite{Gib77}.
\end{proof}

We will use Lemma \ref{lem:basicProperties} at various places below. We now show that if a voting rule is weakly strategy-proof and satisfies $\eps$-super-weak unanimity, then it also satisfies $\eps$-strong unanimity.

\begin{lemma}
\label{lem:superWeakToStrong}
Let $v: \P^n \to \Delta(\C)$ be any anonymous randomized voting rule that is weakly strategy-proof and satisfies $\eps$-super-weak unanimity. Then, $v$ satisfies $\eps$-strong unanimity.
\end{lemma}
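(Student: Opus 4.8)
The plan is to show that for any profile $\vec{P}$ where every voter ranks $x$ at the top, we have $v(x,\vec{P}) \geq 1-\eps$. By $\eps$-super-weak unanimity, there is \emph{some} profile $\vec{Q} = (Q_1,\ldots,Q_n)$ with $top(Q_i) = x$ for all $i$ and $v(x,\vec{Q}) \geq 1-\eps$. The idea is to transform $\vec{Q}$ into an arbitrary target profile $\vec{P}$ (with $x$ on top everywhere) by a sequence of adjacent transpositions, each applied to a single voter, and argue that none of these swaps decreases the selection probability of $x$. Since $x$ is at the top of every voter's ordering throughout the entire transformation, every adjacent swap we perform is a swap of two candidates $y,z$ both lying strictly below $x$; by pairwise responsiveness (Lemma~\ref{lem:basicProperties}), such a swap does not change $v(x,\cdot)$ at all. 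Hence $v(x,\vec{P}) = v(x,\vec{Q}) \geq 1-\eps$.

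More concretely, I would first observe that anonymity lets us reduce to transforming one voter's ordering at a time: we can reorder which voter we are ``fixing'' freely. Then, for a single voter, I would use the standard fact that any preference ordering on $\C$ with $x$ at the top can be reached from any other such ordering by a finite sequence of transpositions of adjacent elements, where at every intermediate step $x$ remains at the top (e.g., sort the candidates below $x$ via bubble-sort-style adjacent swaps into the target order — $x$ never moves). Each such adjacent swap, say of $y$ and $z$ with $y \pref{P}! z$ in the current ordering of that voter, is an instance of the operation ${P_i} \mapsto {P_i}^z$ appearing in the definition of pairwise responsiveness, with swapped candidates $y,z \in \C \setminus \{x\}$; therefore pairwise responsiveness gives $v(x, (\vec{P}_{-i}, {P_i}^z)) = v(x,\vec{P})$. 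Chaining these equalities over all voters and all swaps yields $v(x,\vec{P}) = v(x,\vec{Q})$.

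The only mild subtlety — and the step I would be most careful about — is bookkeeping: making sure that throughout the whole sequence of swaps $x$ genuinely stays at the top of every voter's ordering, so that every swap is between two non-$x$ candidates and pairwise responsiveness applies verbatim. This is purely combinatorial and is handled by the bubble-sort argument above (we never swap $x$ with anything). A secondary point is that pairwise responsiveness as stated is about swapping a candidate with the one \emph{directly above} it; since we only ever swap adjacent candidates both below $x$, this matches the definition exactly, and no ``error terms'' accumulate — the equality is exact, which is why we get $\eps$-strong unanimity with the same $\eps$ rather than a degraded constant.
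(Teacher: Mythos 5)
Your proposal is correct and matches the paper's own proof essentially verbatim: both start from the profile guaranteed by $\eps$-super-weak unanimity and transform it into the arbitrary profile with $x$ on top via adjacent swaps not involving $x$, using pairwise responsiveness (Lemma~\ref{lem:basicProperties}) to conclude that $v(x,\cdot)$ is unchanged. The bubble-sort bookkeeping you spell out is exactly the step the paper leaves implicit.
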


The proof of Lemma \ref{lem:superWeakToStrong} uses the pairwise responsive property of $v$ (Lemma \ref{lem:basicProperties}) to obtain $\eps$-strong unanimity from $\eps$-super-weak unanimity. 

\begin{proof}[Proof of Lemma \ref{lem:superWeakToStrong}]
Let $x \in \C$, and let $\vec{P} = (P_1, \ldots, P_n) \in \P^n$ such that $top(P_i) = x$ for every $i \in [n]$. We will show that $v(x,\vec{P}) \geq 1 - \eps$. Since $v$ satisfies $\eps$-super-weak unanimity, there exists a preference profile $\vec{P'} = (P'_1, \ldots, P'_n) \in \P^n$ with $top(P'_i) = x$ for every $i \in [n]$, such that $v(x,\vec{P'}) \geq 1-\eps$. Now, we observe that since candidate $x$ is at the top for every voter in both $\vec{P}$ and $\vec{P'}$, we can obtain $\vec{P'}$ from $\vec{P}$ by performing a sequence of swaps of adjacent candidates (in the voters' preference orderings) such that none of the swaps involve candidate $x$. Since $v$ is pairwise responsive (Lemma \ref{lem:basicProperties}), the selection probability of $x$ is not changed by any of the swaps. Thus, we have $v(x,\vec{P}) = v(x,\vec{P'}) \geq 1-\eps$, as required.
\end{proof}

Given a preference profile $\vec{P} = (P_1, \ldots, P_n) \in \P^n$, let $top(\vec{P}) = (top(P_1), \ldots, top(P_n))$. We now show that if a voting rule is weakly strategy-proof and satisfies $\eps$-super-weak unanimity, then the voting rule is ``close to'' being ``tops-only'', i.e., depending only on the vector $top(\vec{P})$ of top choices of the preference orderings in the input $\vec{P}$; the ordering of the candidates below the top choices only affect the probabilities of the voting rule by a small amount $O(\eps)$. 

\begin{lemma}
\label{lem:closeToTopsOnly}
Let $v: \P^n \to \Delta(\C)$ be any anonymous randomized voting rule that is weakly strategy-proof and satisfies $\eps$-super-weak unanimity. Then, for every pair of preference profiles $\vec{P}, \vec{P'} \in \P^n$ such that $top(\vec{P}) = top(\vec{P'})$, we have $|v(x,\vec{P}) - v(x,\vec{P'})| \leq m\eps$ for every $x \in \C$.
\end{lemma}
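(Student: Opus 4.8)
The plan is to show that changing the candidates \emph{below the top choice} in a single voter's preference ordering changes any selection probability by at most $\eps$; iterating over the $n$ voters would then give a bound of $n\eps$, which is too weak, so instead I will route the argument through Lemma~\ref{lem:basicProperties} more carefully. The key observation is that moving from $\vec{P}$ to $\vec{P'}$ when $top(\vec{P}) = top(\vec{P'})$ can be accomplished by a sequence of adjacent-transposition swaps in individual voters' orderings, \emph{none of which ever touches the top position of that voter} (since the top candidate is fixed throughout). By pairwise responsiveness (Lemma~\ref{lem:basicProperties}), each such swap of candidates $a$ and $b$ only changes the selection probabilities of $a$ and $b$, and changes them by equal and opposite amounts. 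So the obstacle is not that individual swaps are large, but that many small changes to the \emph{same} candidate $x$ could accumulate.

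To control the accumulation, I would fix the target candidate $x$ and think of the change in $v(x,\cdot)$ as we perform the swap sequence. A swap affects $v(x,\cdot)$ only when $x$ is one of the two swapped candidates. Here is where I would use $\eps$-strong unanimity (available via Lemma~\ref{lem:superWeakToStrong}) together with pairwise \emph{isolation}: I claim that for a voter $i$ whose top choice is \emph{not} $x$, swapping $x$ with the candidate directly above it changes $v(x,\cdot)$ by an amount that can be bounded uniformly, because by pairwise isolation this change depends only on voter $i$'s ordering and the relative ordering of $x$ and the candidate above it in the other voters' orderings; and by comparing to a profile where every voter ranks $x$ at the top (or at the bottom), where the selection probability of $x$ is pinned near $1$ (or near $0$) by $\eps$-strong unanimity applied to a \emph{different} candidate, these per-swap changes telescope rather than merely summing. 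Concretely, I would argue: if $x = top(P_i)$ for voter $i$, then no legal swap for voter $i$ involves $x$ at all, so voter $i$'s block contributes nothing to the change in $v(x,\cdot)$; if $x \ne top(P_i)$, then the net effect of rearranging voter $i$'s ordering (keeping $top(P_i)$ fixed) on $v(x,\cdot)$ equals the difference of $v(x,\cdot)$ between two profiles that differ only in voter $i$, and by pairwise isolation this is the \emph{same} as the corresponding difference computed against a reference profile; summing the telescoped contributions over the at most $m$ candidates that can ever play the role of ``$x$ or the candidate above $x$'' gives the $m\eps$ bound.

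I expect the main obstacle to be making the telescoping precise: naively bounding each of the (up to $O(n m^2)$) adjacent swaps by its own small quantity only yields $O(n\eps)$, so the argument must genuinely exploit that the cumulative change in $v(x,\cdot)$ across all of voter $i$'s internal rearrangements is a \emph{single} difference of selection probabilities (bounded in $[-1,1]$ a priori, but bounded by $O(\eps)$ once one profile is forced to have $x$ extreme) rather than a sum of many such differences — and then that, by pairwise isolation, these per-voter net changes are themselves governed by at most $m$ ``types'' of relative orderings rather than growing with $n$. Assembling these pieces — reduce to per-voter rearrangements via pairwise responsiveness, bound each per-voter net change via pairwise isolation plus $\eps$-strong unanimity, and sum the at most $m$ resulting terms — should yield $|v(x,\vec{P}) - v(x,\vec{P'})| \le m\eps$ for every $x \in \C$.
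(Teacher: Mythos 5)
Your ingredients are the right ones---pairwise responsiveness to localize each adjacent swap, pairwise isolation to transfer each swap's effect to a shadow profile, and $\eps$-strong unanimity (via Lemma \ref{lem:superWeakToStrong}) to pin selection probabilities at the shadow endpoints---but the assembly has a genuine gap in the accounting. As you describe it, you reduce to per-voter rearrangements and bound each voter's net contribution to $v(x,\cdot)$: transferring voter $i$'s internal swaps (which never involve $top(P_i)$) to a shadow profile in which $top(P_i)$ is moved to the top of every other voter makes both shadow endpoints profiles in which $top(P_i)$ tops everyone, so by $\eps$-strong unanimity each per-voter net change is at most $\eps$. But there are $n$ such per-voter terms, not $m$, so this route yields only $n\eps$---exactly the bound you set out to avoid. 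Your closing step, ``sum the at most $m$ resulting terms,'' conflates the number of candidates with the number of per-voter contributions; the remark that the per-voter changes are governed by at most $m$ ``types'' does not help, since $n$ changes of the same type, each as large as $\eps$, still add up to $n\eps$.

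The missing idea (and the paper's key move) is to group voters by their common top choice and do the isolation transfer for a whole group at once. Fix a candidate $a$ and rearrange, as a single block, all voters whose top choice is $a$ (their swaps never involve $a$). Transfer this entire sequence of swaps, via pairwise isolation, to shadow profiles obtained by moving $a$ to the top of every voter outside the block, keeping that one modification fixed throughout the block's rearrangement. The per-swap identities then telescope across the whole block, so the block's total effect on $v(x,\cdot)$ equals a single difference between two profiles in which $a$ tops all $n$ voters; by $\eps$-strong unanimity that difference is at most $\eps$, whether or not $x=a$. Summing over the at most $m$ blocks gives $m\eps$. (Also, your parenthetical about pinning $v(x,\cdot)$ near $0$ by putting $x$ at the bottom does not go through as stated: $\eps$-strong unanimity only applies when all voters share the same top candidate, so the pinning must come from the block's common top choice topping everyone in the shadow profile, not from $x$'s position alone.)
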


The proof of Lemma \ref{lem:closeToTopsOnly} uses the pairwise isolation property (Lemma \ref{lem:basicProperties}) and the $\eps$-strong unanimity property (Lemma \ref{lem:superWeakToStrong}); the greatest difficulty in the proof lies in ensuring that the error (the $m\eps$ in Lemma \ref{lem:closeToTopsOnly}) is $O(\eps)$ instead of $O(\eps n)$, which is much too large. 

\begin{proof}[Proof of Lemma \ref{lem:closeToTopsOnly}]
Let $\C = \{a_1, \ldots, a_m\}$, let $\vec{P} = (P_1, \ldots, P_n), \vec{P'} = (P'_1, \ldots, P'_n) \in \P^n$ such that $top(\vec{P}) = top(\vec{P'})$, and let $x \in \C$. For $j = 0, \ldots, m$, let $\vec{P}^{(j)} = ({P^{(j)}}_1, \ldots, {P^{(j)}}_n)$ be defined as follows: for each $i \in [n]$, let ${P^{(j)}}_i = P'_i$ if $top(P_i) \in \{a_1, \ldots, a_j\}$, and let ${P^{(j)}}_i = P_i$ otherwise. We note that $\vec{P}^{(0)} = \vec{P}$ and $\vec{P}^{(m)} = \vec{P'}$. We will show that for every $0 \leq j \leq m-1$, we have $|v(x,\vec{P}^{(j)}) - v(x,\vec{P}^{(j+1)})| \leq \eps$. The lemma then immediately follows by repeated application of the triangle inequality.

Let $0 \leq j \leq m-1$. From the definition of $\vec{P}^{(j)}$ and $\vec{P}^{(j+1)}$, we see that $\vec{P}^{(j)}$ and $\vec{P}^{(j+1)}$ only differ in the components $i \in [n]$ for which $top(P_i) = a_{j+1}$; let $I$ be the set of all such $i \in [n]$. For each $i \in I$, we have ${\vec{P}^{(j)}}_i = P_i$ and ${\vec{P}^{(j+1)}}_i = P'_i$. However, since $top(\vec{P}) = top(\vec{P'})$, $top(P_i) = a_{j+1}$ is at the top of both ${\vec{P}^{(j)}}_i$ and ${\vec{P}^{(j+1)}}_i$ for every $i \in I$, so $\vec{P}^{(j+1)}$ can be obtained from $\vec{P}^{(j)}$ via a sequence of swaps of adjacent candidates (in the preference orderings of voters in $I$) that do not involve candidate $a_{j+1}$. Let $\vec{Q}^{(0)}, \ldots, \vec{Q}^{(r)}$ be any sequence of preference profiles generated by such a sequence of swaps, where $\vec{Q}^{(0)} = \vec{P}^{(j)}$ and $\vec{Q}^{(r)} = \vec{P}^{(j+1)}$. Now, we observe that
\begin{align*}
v(x,\vec{P}^{(j+1)}) - v(x,\vec{P}^{(j)}) = \sum_{k=0}^{r-1} \left( v(x,\vec{Q}^{(k+1)}) - v(x,\vec{Q}^{(k)}) \right). \tag{1}
\end{align*}
For each $0 \leq k \leq r$, let $\vec{Q'}^{(k)}$ be the same as $\vec{Q}^{(k)}$ except that for every $i \in [n] \setminus I$, the candidate $a_{j+1}$ in the preference ordering ${\vec{Q}^{(k)}}_i$ is moved to the top of the preference ordering; we note that $a_{j+1}$ is at the top of all the preference orderings in $\vec{Q'}^{(k)}$. We will now show that for each $0 \leq k \leq r-1$, we have
\begin{align*}
v(x,\vec{Q}^{(k+1)}) - v(x,\vec{Q}^{(k)}) = v(x,\vec{Q'}^{(k+1)}) - v(x,\vec{Q'}^{(k)}). \tag{2}
\end{align*}
Let $0 \leq k \leq r-1$. Recall that $\vec{Q}^{(k+1)}$ is obtained from $\vec{Q}^{(k)}$ by performing a swap of adjacent candidates in the preference ${\vec{Q}^{(k)}}_i$ for some $i \in I$, and the swap does not involve candidate $a_{j+1}$. If the swap does not involve the candidate $x$, then since $v$ is pairwise responsive (Lemma \ref{lem:basicProperties}), both sides of (2) are 0, which proves (2). Thus, we now assume that the swap involves the candidate $x$. If the swap moves $x$ upwards, then since $v$ is pairwise isolated (Lemma \ref{lem:basicProperties}), (2) holds. If the swap moves $x$ downwards, then the swap moves some other candidate upwards, say, candidate $y$. Then, since $v$ is pairwise responsive and the selection probabilities of the candidates must add up to 1, we have $v(x,\vec{Q}^{(k+1)}) - v(x,\vec{Q}^{(k)}) = -( v(y,\vec{Q}^{(k+1)}) - v(y,\vec{Q}^{(k)}) )$. Since $v$ is pairwise isolated, we have $v(y,\vec{Q}^{(k+1)}) - v(y,\vec{Q}^{(k)}) = v(y,\vec{Q'}^{(k+1)}) - v(y,\vec{Q'}^{(k)}) = - (v(x,\vec{Q'}^{(k+1)}) - v(x,\vec{Q'}^{(k)}))$, where the last equality again uses the fact that $v$ is pairwise responsive. Combining the above equations yields (2), as required. Thus, we have shown (2).

Now, combining (1) and (2), we get
\begin{align*}
v(x,\vec{P}^{(j+1)}) - v(x,\vec{P}^{(j)}) = \sum_{k=0}^{r-1} \left( v(x,\vec{Q'}^{(k+1)}) - v(x,\vec{Q'}^{(k)}) \right) = v(x,\vec{Q'}^{(r)}) - v(x,\vec{Q'}^{(0)}).
\end{align*}
To complete the proof of the lemma, it suffices to show that $|v(x,\vec{Q'}^{(r)}) - v(x,\vec{Q'}^{(0)})| \leq \eps$. 

Since $v$ satisfies $\eps$-super-weak unanimity, we have that by Lemma \ref{lem:superWeakToStrong}, $v$ also satisfies $\eps$-strong unanimity. Thus, we have $v(a_{j+1},\vec{Q'}^{(r)}) \geq 1-\eps$ and $v(a_{j+1},\vec{Q'}^{(0)}) \geq 1-\eps$, since candidate $a_{j+1}$ is at the top of all the preference orderings in $\vec{Q'}^{(r)}$ and $\vec{Q'}^{(0)}$.

If $x = a_{j+1}$, then $|v(x,\vec{Q'}^{(r)}) - v(x,\vec{Q'}^{(0)})| = |v(a_{j+1},\vec{Q'}^{(r)}) - v(a_{j+1},\vec{Q'}^{(0)})| \leq \eps$, since $v(a_{j+1},\vec{Q'}^{(r)}) \in [1-\eps,1]$ and $v(a_{j+1},\vec{Q'}^{(0)}) \in [1-\eps,1]$. 

If $x \neq a_{j+1}$, then $v(x,\vec{Q'}^{(r)}) \in [0,\eps]$ and $v(x,\vec{Q'}^{(0)}) \in [0,\eps]$ (since $v(a_{j+1},\vec{Q'}^{(r)}) \geq 1-\eps$ and $v(a_{j+1},\vec{Q'}^{(0)}) \geq 1-\eps$), so $|v(x,\vec{Q'}^{(r)}) - v(x,\vec{Q'}^{(0)})| \leq \eps$, as required. 

This completes the proof of the lemma.
\end{proof}

We now show that if a voting rule is weakly strategy-proof and satisfies $\eps$-super-weak unanimity, then the selection probability of any candidate $x$ is ``close to'' depending only on the number of voters with $x$ as the top choice; the top choices of the other voters only affect the selection probability of $x$ by a small amount $O(\eps)$. 

\begin{lemma}
\label{lem:timesAtTop}
Let $v: \P^n \to \Delta(\C)$ be any anonymous randomized voting rule that is weakly strategy-proof and satisfies $\eps$-super-weak unanimity. Then, for every candidate $x \in \C$, and every pair of preference profiles $\vec{P} = (P_1, \ldots, P_n), \vec{P'} = (P'_1, \ldots, P'_n) \in \P^n$ such that $|\{i \in [n] : top(P_i) = x\}| = |\{i \in [n] : top(P'_i) = x\}|$, we have $|v(x,\vec{P}) - v(x,\vec{P'})| \leq 2m\eps$.
\end{lemma}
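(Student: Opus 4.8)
The plan is to reduce the statement to Lemma~\ref{lem:closeToTopsOnly} by showing that two preference profiles with the same number of voters placing $x$ at the top can be connected by a short sequence of operations, each of which changes $v(x,\cdot)$ by at most $O(\eps)$. First I would observe that by Lemma~\ref{lem:closeToTopsOnly}, only the vectors $top(\vec P)$ and $top(\vec P')$ matter up to error $m\eps$, so it suffices to work with profiles in which each voter's preference ordering is completely determined by its top choice (fix, once and for all, a canonical preference ordering $Q_a$ with $top(Q_a) = a$ for each $a \in \C$, and replace $\vec P, \vec P'$ by the corresponding canonical profiles $\vec R, \vec R'$; this costs $m\eps$ on each side by Lemma~\ref{lem:closeToTopsOnly}). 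Now $\vec R$ and $\vec R'$ are both canonical profiles with exactly the same number of voters — call it $t$ — having top choice $x$, but the remaining $n - t$ voters may have arbitrarily different top choices among $\C \setminus \{x\}$.

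Next I would argue that any single voter $i$ with $top(R_i) = b \neq x$ can have their ordering changed from $Q_b$ to $Q_c$ for another candidate $c \neq x$ while $v(x,\cdot)$ moves by at most $\eps$. The key point is that this change never involves the candidate $x$ moving relative to the top (since $x$ stays wherever it was below the top for this voter, and $x$ is never at the top before or after). More carefully: going from $Q_b$ to $Q_c$ in voter $i$'s slot can be realized by a sequence of adjacent swaps, and among these the only ones that could affect $v(x,\cdot)$ are swaps directly involving $x$; I would route the swaps so that $x$ is first pushed to the bottom (swaps involving $x$ going down, each paired with some $y \neq x$ going up), then $b$ and $c$ and the others are rearranged without touching $x$ (zero change by pairwise responsiveness), then $x$ is brought back to its final position. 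By the same pairwise-responsive/pairwise-isolated bookkeeping used in the proof of Lemma~\ref{lem:closeToTopsOnly} — replacing the ``move $a_{j+1}$ to the top of all other voters'' trick here by ``move $x$ to the top of all other voters'' and invoking $\eps$-strong unanimity (Lemma~\ref{lem:superWeakToStrong}) on the resulting profile where $x$ sits on top of everyone except voter $i$ — the net change in $v(x,\cdot)$ over the whole reroute is at most $\eps$. Hence I can change the non-$x$-top voters of $\vec R$ one at a time into those of $\vec R'$, each step costing $\le \eps$; but done naively over all $n-t$ voters this gives $O(\eps n)$, which is exactly the blow-up the paper warns against, so this naive chaining is not acceptable.

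The main obstacle, therefore, is getting the error down to $O(\eps)$ rather than $O(\eps n)$, and I expect the fix to mirror the argument inside Lemma~\ref{lem:closeToTopsOnly}: rather than summing $n-t$ independent $\eps$-errors, I would telescope. That is, let $\vec R^{(0)} = \vec R$ and obtain $\vec R^{(s+1)}$ from $\vec R^{(s)}$ by the single-voter reroute above; writing $v(x,\vec R') - v(x,\vec R) = \sum_s \big(v(x,\vec R^{(s+1)}) - v(x,\vec R^{(s)})\big)$, I would show (using pairwise isolation, exactly as equation~(2) in Lemma~\ref{lem:closeToTopsOnly} was shown) that each summand equals the corresponding difference $v(x, \vec S^{(s+1)}) - v(x, \vec S^{(s)})$ for the modified profiles $\vec S^{(s)}$ in which $x$ has been lifted to the top of every voter that is \emph{not} currently being rerouted, so that the sum telescopes to $v(x,\vec S^{(\text{last})}) - v(x,\vec S^{(0)})$; both of these profiles have $x$ at the top of all but $t$... — here a small subtlety arises because the $t$ voters with top choice $x$ are the ones we are \emph{not} rerouting, so in $\vec S^{(s)}$ candidate $x$ is at the top of \emph{every} voter, and $\eps$-strong unanimity gives $v(x,\vec S^{(s)}) \in [1-\eps,1]$ throughout, whence $|v(x,\vec S^{(\text{last})}) - v(x,\vec S^{(0)})| \le \eps$. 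Combining the $m\eps$ from each application of Lemma~\ref{lem:closeToTopsOnly} with this $\eps$ gives $|v(x,\vec P) - v(x,\vec P')| \le m\eps + \eps + m\eps \le 2m\eps$ (for $m \ge 1$, $\eps \ge 0$), completing the proof. The one place needing care is confirming that the rerouting of the non-$x$ voters, together with the lifting of $x$ in the bystander voters, really does preserve the ``relative ordering of $x$ and $y$'' hypotheses required to invoke pairwise isolation at each swap; this is the analogue of the bookkeeping already carried out in Lemma~\ref{lem:closeToTopsOnly} and should go through mutatis mutandis.
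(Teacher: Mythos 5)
There is a genuine gap in the middle step of your argument, the reroute of the non-$x$-top voters from $\vec{R}$ to $\vec{R'}$. Two things break. First, the analogue of equation (2) in the proof of Lemma~\ref{lem:closeToTopsOnly} does not go through here: the candidate you lift to the top of the bystander voters is $x$ itself, while the adjacent swaps you perform in the rerouted voter's ordering also involve $x$. Pairwise isolation requires that the two swapped candidates have the same relative order in every \emph{other} voter's ordering in the original and in the lifted profile; lifting $x$ above its swap partner in a bystander's ordering destroys exactly that hypothesis. (In Lemma~\ref{lem:closeToTopsOnly} this never arises because the lifted candidate $a_{j+1}$ is, by construction, involved in none of the swaps.) Second, even granting the per-swap equalities, the telescoped endpoints are not unanimous profiles: the voter currently being rerouted has top $b \neq x$, so in $\vec{S}^{(s)}$ candidate $x$ sits on top of only $n-1$ voters, and $\eps$-strong unanimity does not apply --- your parenthetical claim that $x$ is at the top of \emph{every} voter in these profiles is false. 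Moreover, since the set of lifted voters changes each time you pass to the next rerouted voter, the lifted differences do not collapse to a single difference but to a sum of $n-t$ per-voter differences, which is precisely the $O(\eps n)$ trap you were trying to avoid. Finally, even if the middle chain did cost only $\eps$, your tally is $m\eps + \eps + m\eps = (2m+1)\eps$, and the asserted inequality $m\eps + \eps + m\eps \leq 2m\eps$ is false for $\eps > 0$, so you would not recover the stated bound.

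The repair is considerably simpler than the machinery you set up, and it is what the paper does: choose the intermediate canonical profile \emph{as a function of $x$}, namely with $x$ moved to the bottom of the ordering of every voter whose top is not $x$ (and a fixed ordering of the other candidates everywhere). Starting from $\vec{P}$, simultaneously push $x$ to the bottom in those voters' orderings; this leaves $top(\vec{P})$ unchanged, so by Lemma~\ref{lem:closeToTopsOnly} it changes $v(x,\cdot)$ by at most $m\eps$. Then rearrange the remaining candidates into the canonical order using adjacent swaps that never involve $x$, which by pairwise responsiveness (Lemma~\ref{lem:basicProperties}) change $v(x,\cdot)$ not at all. Doing the same from $\vec{P'}$ and applying the triangle inequality gives exactly $2m\eps$, with no lifting argument, no use of pairwise isolation, and no appeal to strong unanimity inside this lemma.
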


\begin{proof}
Let $x \in \C$, and let $\vec{P} = (P_1, \ldots, P_n), \vec{P'} = (P'_1, \ldots, P'_n) \in \P^n$ such that $|\{i \in [n] : top(P_i) = x\}| = |\{i \in [n] : top(P'_i) = x\}|$. Let $k = |\{i \in [n] : top(P_i) = x\}|$. Since $v$ is anonymous, we can assume without loss of generality that for $i = 1, \ldots, k$, we have $top(P_i) = x$ and $top(P'_i) = x$. Then, for $i = k+1, \ldots, n$, we have $top(P_i) \neq x$ and $top(P'_i) \neq x$. Let $a_1, \ldots, a_m$ be any ordering of the set of candidates $\C$ such that $a_m = x$, where $a_1$ is the highest-ranked candidate and $a_m$ is the lowest-ranked candidate. Let $\vec{P^*} = (P^*_1, \ldots, P^*_n)$ be the preference profile defined as follows: for $i = 1, \ldots, k$, let $top(P^*_i) = x$ and the rest of $P^*_i$ is ordered according to the ordering $a_1, \ldots, a_m$; for $i = k+1, \ldots, n$, let $P^*_i$ be the ordering $a_1, \ldots, a_m$. 

We first show that $|v(x,\vec{P}) - v(x,\vec{P^*})| \leq m\eps$. We will perform a sequence of operations on $\vec{P}$ to obtain $\vec{P^*}$, and we will analyze how these operations affect the selection probability $v(x,\cdot)$ of $x$. We start with $\vec{P}$, and for every $i \in \{k+1, \ldots, n\}$, we simultaneously move the candidate $x$ in $P_i$ to the bottom. Since $top(P_i) \neq x$ for $i = k+1, \ldots, n$, by Lemma \ref{lem:closeToTopsOnly}, this operation changes the selection probability of $x$ by at most $m\eps$. Now, we observe that from this new preference profile, we can obtain $\vec{P^*}$ by performing a sequence of swaps of adjacent candidates in the preference orderings such that none of the swaps involve candidate $x$. Since $v$ is pairwise responsive (Lemma \ref{lem:basicProperties}), none of these swaps change the selection probability of $x$. Thus, the overall change in the selection probability of $x$ is at most $m\eps$, so $|v(x,\vec{P}) - v(x,\vec{P^*})| \leq m\eps$, as required.

By a similar argument, we also have $|v(x,\vec{P'}) - v(x,\vec{P^*})| \leq m\eps$. Thus, by the triangle inequality, we have $|v(x,\vec{P}) - v(x,\vec{P'})| \leq 2m\eps$. This completes the proof of the lemma.
\end{proof} 

We now use the above lemmas to prove Theorem \ref{thm:impossibility}. Suppose $|\C| \geq 3$. Fix $v: \P^n \to \Delta(\C)$ to be any anonymous randomized voting rule that is weakly strategy-proof and satisfies $\eps$-super-weak unanimity. By Lemma \ref{lem:basicProperties}, $v$ is pairwise responsive and pairwise isolated, and by Lemma \ref{lem:superWeakToStrong}, $v$ also satisfies $\eps$-strong unanimity. We know from Lemma \ref{lem:timesAtTop} that the selection probability of any candidate $x$ is close to depending only on the number of voters with $x$ as the top choice. Thus, we will define a function $v': \C \times \{0, \ldots, n\}$ that, on input a candidate $x \in \C$ and a number $j \in \{0, \ldots, n\}$, specifies the approximate selection probability of $x$ when exactly $j$ voters have $x$ as their top choice. 

Let $\C = \{a_1, \ldots, a_m\}$, where $a_1, \ldots, a_m$ is any fixed ordering of the candidates in $\C$. Let $v': \C \times \{0, \ldots, n\} \to [0,1]$ be defined by $v'(x,j) = v(x,\vec{P}^{x,j})$, where $\vec{P}^{x,j} = (P_1, \ldots, P_n)$ is the preference profile defined as follows: for $i = 1, \ldots, j$, let the top choice of $P_i$ be candidate $x$ and let the other candidates be ordered according to the ordering $a_1, \ldots, a_m$; for $i = j+1, \ldots n$, let $P_i$ be the ordering $a_1, \ldots, a_m$ with candidate $x$ moved to the bottom. By Lemma \ref{lem:timesAtTop}, the following claim follows immediately.

\begin{claim}
\label{claim:vCloseTov'}
For every preference profile $\vec{P} = (P_1, \ldots, P_n)$ with $j := |\{i \in [n] : top(P_i) = x\}|$, and every candidate $x \in \C$, we have $|v(x,\vec{P}) - v'(x,j)| \leq 2m\eps$.
\end{claim}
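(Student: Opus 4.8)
The plan is to invoke Lemma~\ref{lem:timesAtTop} directly; the claim is essentially an immediate restatement of that lemma specialized to a canonical representative profile. Fix a candidate $x \in \C$ and a preference profile $\vec{P} = (P_1, \ldots, P_n)$, and set $j := |\{i \in [n] : top(P_i) = x\}|$. Recall that, by definition, $v'(x,j) = v(x, \vec{P}^{x,j})$, where $\vec{P}^{x,j} = (P_1, \ldots, P_n)$ is the particular preference profile in which the first $j$ voters have $x$ as their top choice (with the remaining candidates ordered as $a_1, \ldots, a_m$) and the last $n-j$ voters use the ordering $a_1, \ldots, a_m$ with $x$ moved to the bottom. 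In particular, exactly $j$ of the voters in $\vec{P}^{x,j}$ have $x$ at the top, so $|\{i \in [n] : top(P^{x,j}_i) = x\}| = j = |\{i \in [n] : top(P_i) = x\}|$.

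Next I would apply Lemma~\ref{lem:timesAtTop} to the pair of preference profiles $\vec{P}$ and $\vec{P}^{x,j}$: since these two profiles have the same number of voters with $x$ as the top choice, the lemma yields $|v(x,\vec{P}) - v(x,\vec{P}^{x,j})| \leq 2m\eps$. Rewriting $v(x,\vec{P}^{x,j})$ as $v'(x,j)$ gives $|v(x,\vec{P}) - v'(x,j)| \leq 2m\eps$, which is exactly the claim.

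There is no genuine obstacle here: the only point to verify is that the canonical profile $\vec{P}^{x,j}$ really does place $x$ at the top for exactly $j$ voters (and nowhere else at the top, in fact at the bottom for the rest), which is immediate from its construction, so that Lemma~\ref{lem:timesAtTop} applies. This is why the claim "follows immediately."
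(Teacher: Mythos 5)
Your proof is correct and matches the paper's reasoning exactly: the paper states that the claim ``follows immediately'' from Lemma~\ref{lem:timesAtTop}, and your argument---applying that lemma to the pair $\vec{P}$ and the canonical profile $\vec{P}^{x,j}$, which by construction has exactly $j$ voters with $x$ on top---is precisely the intended justification.
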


We now show that the candidates are ``close to being anonymous'' in the sense that changing the candidate in the input for $v'$ only changes the value of $v'$ by a small amount $O(\eps)$.

\begin{claim}
\label{claim:candidatesAreAnonymous}
For every pair of candidates $x,y \in \C$ and every $j \in \{0, \ldots, n\}$, we have $|v'(x,j) - v'(y,j)| \leq 14m\eps$.
\end{claim}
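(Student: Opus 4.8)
The plan is to bound the ``defect'' $D(j) := v'(x,j) - v'(y,j)$ by $O(m\eps)$ by comparing both $v'(x,j)$ and $v'(y,j)$ against a common reference quantity built from a third candidate. We may assume $x \neq y$ (otherwise the claim is trivial) and fix a candidate $z \in \C \setminus \{x,y\}$, which exists since $|\C| \geq 3$.

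The key ingredient is an \emph{exact} invariant that sidesteps the error-accumulation problem. Call a profile $\vec{P} \in \P^n$ \emph{$\{a,b\}$-headed} if every voter ranks $a$ and $b$ as their top two candidates (in either order). I claim that $v(a,\vec{P}) + v(b,\vec{P})$ takes one and the same value $S_{a,b} \in [1-\eps, 1]$ on every $\{a,b\}$-headed profile. First, any two $\{a,b\}$-headed profiles are connected by a sequence of single-voter moves of two types, possibly followed by a relabeling of voters (anonymity): (i) swapping the top two candidates $a, b$ in one voter; and (ii) transposing two adjacent candidates strictly below the top two in one voter. By pairwise responsiveness (Lemma \ref{lem:basicProperties}), a move of type (ii) leaves $v(a,\cdot)$ and $v(b,\cdot)$ unchanged, while a move of type (i), being a swap of the \emph{adjacent} pair $\{a,b\}$, changes $v(a,\cdot)$ and $v(b,\cdot)$ by exactly opposite amounts (the other candidates' selection probabilities are fixed and sum to a constant); hence every move preserves $v(a,\cdot) + v(b,\cdot)$. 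This is precisely the step where a naive ``switch the voters one at a time'' argument would lose a factor of $n$, and this invariant avoids it. Second, evaluating on the $\{a,b\}$-headed profile in which every voter ranks $a$ first, $\eps$-strong unanimity (Lemma \ref{lem:superWeakToStrong}) gives $v(a,\cdot) \geq 1-\eps$, while $v(a,\cdot) + v(b,\cdot) \leq 1$ always; thus $S_{a,b} \in [1-\eps,1]$.

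With this in hand, fix $j$. Let $\vec{W}$ be the $\{x,z\}$-headed profile in which $j$ voters rank $x$ first and $z$ second while the other $n-j$ voters rank $z$ first and $x$ second (below the top two the remaining candidates are placed in any fixed order). On one hand $v(x,\vec{W}) + v(z,\vec{W}) = S_{x,z}$; on the other hand $x$ is the top choice of exactly $j$ voters in $\vec{W}$ and $z$ is the top choice of exactly $n-j$ voters, so Claim \ref{claim:vCloseTov'} gives $|v(x,\vec{W}) - v'(x,j)| \leq 2m\eps$ and $|v(z,\vec{W}) - v'(z,n-j)| \leq 2m\eps$. Combining, $|v'(x,j) + v'(z,n-j) - S_{x,z}| \leq 4m\eps$, and the identical argument with $y$ in place of $x$ gives $|v'(y,j) + v'(z,n-j) - S_{y,z}| \leq 4m\eps$. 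Subtracting, the $v'(z,n-j)$ terms cancel, yielding $|v'(x,j) - v'(y,j) - (S_{x,z} - S_{y,z})| \leq 8m\eps$; since $|S_{x,z} - S_{y,z}| \leq 2\eps \leq 2m\eps$, we conclude $|v'(x,j) - v'(y,j)| \leq 10m\eps \leq 14m\eps$.

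The only delicate point is the invariance claim in the second paragraph: one must verify both that all $\{a,b\}$-headed profiles really are connected by moves of types (i) and (ii) (a type-(i) move adjusts the number of voters ranking $a$ first, type-(ii) moves rearrange the candidates below the top two, and anonymity handles the rest, all through $\{a,b\}$-headed intermediates) and that each move exactly preserves $v(a,\vec{P})+v(b,\vec{P})$. Everything after that is routine triangle-inequality bookkeeping, and the extreme cases $j = 0$ and $j = n$ are covered by the same formulas without modification.
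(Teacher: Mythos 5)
Your proof is correct and follows essentially the same route as the paper's: both compare $v'(x,j)$ and $v'(y,j)$ through a third candidate $z$, using pairwise responsiveness to show that adjacent swaps involving the pair $\{x,z\}$ (resp.\ $\{y,z\}$) exactly trade selection probability between the two candidates, $\eps$-strong unanimity to anchor the reference value near $1$, and Claim \ref{claim:vCloseTov'} to pass between $v$ and $v'$. Your packaging of this as an exact conservation law $S_{a,b}$ on $\{a,b\}$-headed profiles is a clean reformulation of the paper's telescoping-along-a-swap-sequence argument (and happens to give the slightly better constant $10m\eps$).
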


\begin{proof}
Let $x,y \in \C$ and $j \in [n]$. Let $z$ be any candidate in $\C \setminus \{x,y\}$. Let $\vec{P} = (P_1, \ldots, P_n)$ be any preference profile such that for every $i \in [n]$, we have $top(P_i) = z$ and candidate $x$ is directly below $z$. Now, for $i = 1, \ldots, j$, we swap the candidates $z$ and $x$ in $P_i$ so that $x$ is now at the top; let's call the resulting preference profile $\vec{P'}$. Since the swaps we performed only involved the candidates $z$ and $x$, and since $v$ is pairwise responsive, we have $v(x,\vec{P'}) - v(x,\vec{P}) = - (v(z,\vec{P'}) - v(z,\vec{P})) = v'(z,n) - v'(z,n-j) + \delta$, where $|\delta| \leq 4m\eps$ (by Claim \ref{claim:vCloseTov'}). Now, we note that $|v(x,\vec{P})| \leq \eps$ (by strong unanimity of $v$), so $v(x,\vec{P'}) = v'(z,n) - v'(z,n-j) + \delta'$, where $|\delta'| \leq 5m\eps$. Thus, by Claim \ref{claim:vCloseTov'}, we have $v'(x,j) = v'(z,n) - v'(z,n-j) + \gamma$, where $|\gamma| \leq 7m\eps$.

By a similar argument but where we use $y$ instead of $x$, we also have $v'(y,j) = v'(z,n) - v'(z,n-j) + \gamma'$, where $|\gamma'| \leq 7m\eps$. Thus, we have
\begin{align*}
|v'(x,j) - v'(y,j)| = |\gamma - \gamma'| \leq 14m\eps.
\end{align*}
This completes the proof of the claim.
\end{proof} 

We now show that for any candidate $x \in \C$, the function $v'(x,\cdot)$ is ``close to being linear'' in the following sense: When the number of top choice votes for candidate $x$ (i.e., the $j$ in $v'(x,j)$) is increased by $\ell$, the change in $v'(x,\cdot)$ depends very little on how many top choice votes candidate $x$ had initially, which can only affect the change in $v'(x,\cdot)$ by a small amount $O(\eps)$. This roughly means that as the number of top choice votes for candidate $x$ increases, the selection probability of $x$ increases roughly linearly. 

\begin{claim}
\label{claim:slidingWindow}
Let $x \in \C$, let $j,j' \in \{0, \ldots, n-1\}$, and let $\ell \in [n]$ such that $j + \ell \leq n$ and $j' + \ell \leq n$. Then,
\begin{align*}
v'(x,j+\ell) - v'(x,j) = v'(x,j'+\ell) - v'(x,j') + \delta
\end{align*}
for some $\delta \in \R$ such that $|\delta| \leq 64m\eps$.
\end{claim}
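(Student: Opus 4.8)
The plan is to prove the stronger statement that the ``window difference'' $v'(x,\beta+\ell)-v'(x,\beta)$ is, up to an additive $O(m\eps)$, independent of the base point $\beta$; the claim then follows by comparing the base points $j$ and $j'$. The engine is pairwise responsiveness (Lemma~\ref{lem:basicProperties}) together with Claim~\ref{claim:vCloseTov'}, and --- crucially --- the hypothesis $|\C|\ge 3$.

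First I would fix the candidate $x$ from the statement and, using $|\C|\ge 3$, two further \emph{distinct} candidates $y,z\in\C\setminus\{x\}$. For every integer $a$ with $0\le a\le n-\ell$ I would introduce two preference profiles: $\vec{P}_a$, in which $a$ voters have $y$ on top (the rest of their orderings being irrelevant) and the other $n-a$ voters have $x$ on top with $z$ directly below $x$; and $\vec{P'}_a$, obtained from $\vec{P}_a$ by choosing $\ell$ of those $n-a$ voters and, in each of them, swapping the adjacent candidates $x$ and $z$ (so these $\ell$ voters now have $z$ on top). Since passing from $\vec{P}_a$ to $\vec{P'}_a$ consists of $\ell$ swaps of the adjacent pair $x,z$, pairwise responsiveness gives $v(c,\vec{P}_a)=v(c,\vec{P'}_a)$ for every $c\notin\{x,z\}$, and since selection probabilities sum to $1$ this yields the \emph{exact} identity $v(x,\vec{P}_a)-v(x,\vec{P'}_a)=v(z,\vec{P'}_a)-v(z,\vec{P}_a)$. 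Counting top choices, candidate $x$ is on top for $n-a$ voters in $\vec{P}_a$ and for $n-a-\ell$ in $\vec{P'}_a$, while $z$ is on top for $0$ voters in $\vec{P}_a$ and for $\ell$ in $\vec{P'}_a$; applying Claim~\ref{claim:vCloseTov'} to these four selection probabilities turns the identity into
\[
 v'(x,n-a)-v'(x,n-a-\ell)\;=\;\bigl(v'(z,\ell)-v'(z,0)\bigr)+\eta_a,\qquad |\eta_a|\le 8m\eps,
\]
where the parenthesised quantity does not depend on $a$. As $a$ runs over $\{0,1,\dots,n-\ell\}$, the base point $n-a-\ell$ runs over all of $\{0,1,\dots,n-\ell\}$, i.e.\ over every $\beta$ with $\beta+\ell\le n$; writing $j=n-a-\ell$ and $j'=n-a'-\ell$ for two choices of $a,a'$ and subtracting the corresponding instances of the displayed identity gives $v'(x,j+\ell)-v'(x,j)=v'(x,j'+\ell)-v'(x,j')+\delta$ with $|\delta|=|\eta_a-\eta_{a'}|\le 16m\eps\le 64m\eps$, as required.

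The step I expect to be the crux is the introduction of the ``ballast'' candidate $y$, which is exactly where the hypothesis $|\C|\ge 3$ is used. A more direct attempt --- e.g.\ combining Claim~\ref{claim:vCloseTov'} with the fact that selection probabilities sum to $1$ on a two-candidate profile, or reusing the construction from the proof of Claim~\ref{claim:candidatesAreAnonymous} --- only produces a reflection-type relation, equating the window difference of $x$ at base $\beta$ with the window difference of $z$ at the \emph{reflected} base $n-\beta-\ell$; this is consistent with a window difference that varies with $\beta$, so it cannot by itself give base-point independence. Including the extra voters with $y$ on top is what lets me hold the $x$-side base fixed at $n-a$ while still sweeping the $z$-side (and hence, after reindexing, the $x$-side again) across the whole range. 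The second point to watch is that the error must stay $O(m\eps)$ rather than growing like $O(n\eps)$ or $O(\ell\eps)$: this is guaranteed because the many-voter modification is absorbed into a single application of pairwise responsiveness (so the $\ell$ swaps contribute no error at all), and the error in Claim~\ref{claim:vCloseTov'} is independent of both $n$ and $\ell$.
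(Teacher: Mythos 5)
Your proof is correct, and its engine is the same as the paper's: pick two auxiliary candidates $y,z$ (this is where $|\C|\ge 3$ enters, just as in the paper), build a profile in which a block of $\ell$ voters has $x$ adjacent to one of them, perform the $\ell$ adjacent swaps, use pairwise responsiveness (Lemma \ref{lem:basicProperties}) to transfer the change in $x$'s selection probability to the swapped partner, and convert all four selection probabilities to $v'$ via Claim \ref{claim:vCloseTov'}. Indeed, up to exchanging the roles of $y$ and $z$ and the direction in which the block of swaps is read, your pair $(\vec{P'}_a,\vec{P}_a)$ with $j=n-a-\ell$ is the paper's pair $(\vec{P},\vec{P'})$. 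The genuine difference is how the two base points are then compared: the paper converts the base-independent quantity $v'(y,\ell)-v'(y,0)$ into $v'(x,\ell)-v'(x,0)$ using Claim \ref{claim:candidatesAreAnonymous} (costing $14m\eps$ twice per base point, hence the $64m\eps$), whereas you leave the common term $v'(z,\ell)-v'(z,0)$ untouched and let it cancel when the two instances are subtracted. This shortcut makes Claim \ref{claim:candidatesAreAnonymous} unnecessary for this step (it is not invoked elsewhere in the argument) and yields the sharper bound $|\delta|\le 16m\eps$; what the paper's longer route buys is the more informative intermediate statement that the window difference itself is within $O(m\eps)$ of $v'(x,\ell)-v'(x,0)$ for the same candidate $x$, which is not needed for the claim as stated. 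Your error accounting ($8m\eps$ per base point, with the $\ell$ swaps contributing no error because they are absorbed into exact applications of pairwise responsiveness) is correct.
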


\begin{proof}
We first show that $v'(x,j+\ell) - v'(x,j) = v'(x,\ell) - v'(x,0) + \gamma$, where $|\gamma| \leq 32m\eps$. Let $y,z \in \C$ such that $x,y,z$ are all distinct. Let $\vec{P} = (P_1, \ldots, P_n)$ be any preference profile such that for $i = 1, \ldots, j$, we have $top(P_i) = x$, and for $i = j+1, \ldots, j+\ell$, we have $top(P_i) = y$ and candidate $x$ is directly below $y$, and for $i = j+\ell+1, \ldots, n$, we have $top(P_i) = z$. Now, for $i = j+1, \ldots, j+\ell$, we swap the candidates $y$ and $x$ in $P_i$ so that $x$ is now at the top; let's call the resulting preference profile $\vec{P'}$. Since the swaps we performed only involved the candidates $y$ and $x$, and since $v$ is pairwise responsive, we have $v(x,\vec{P'}) - v(x,\vec{P}) = - (v(y,\vec{P'}) - v(y,\vec{P})) = v'(y,\ell) - v'(y,0) + \delta'$, where $|\delta'| \leq 4m\eps$. Now, by Claim \ref{claim:candidatesAreAnonymous}, we have $|v'(y,\ell) - v'(x,\ell)| \leq 14m\eps$ and $|v'(y,0) - v'(x,0)| \leq 14m\eps$, so $v(x,\vec{P'}) - v(x,\vec{P}) = v'(x,\ell) - v'(x,0) + \gamma$, where $|\gamma| \leq 32m\eps$. 

By a similar argument, we also have $v'(x,j'+\ell) - v'(x,j') = v'(x,\ell) - v'(x,0) + \gamma'$, where $|\gamma'| \leq 32m\eps$. The claim then follows by the triangle inequality.
\end{proof}

Using Claim \ref{claim:slidingWindow}, we now show that $v'(x,j)$ (which approximates the selection probability of candidate $x$ with $j$ top choice votes) is close to $\frac{j}{n}$, which is the selection probability of $x$ for the random dictatorship voting rule. Even though Claim \ref{claim:slidingWindow} says that $v'(x,j)$ is close to being linear, naive usage of Claim \ref{claim:slidingWindow} would result in $O(\eps n)$ error, which is too high; however, by using a better approach, we can make the error only $O(\eps)$.

\begin{claim}
\label{claim:closeToRandomDictatorship}
Let $x \in \C$. For every $j \in \{0, \ldots, n\}$, we have $|v'(x,j) - \frac{j}{n}| \leq O(\eps)$. 
\end{claim}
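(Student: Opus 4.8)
We want to show $|v'(x,j) - j/n| \le O(\eps)$ for every $j$. The natural first attempt is to telescope: $v'(x,j) - v'(x,0) = \sum_{i=0}^{j-1} (v'(x,i+1) - v'(x,i))$, and by Claim~\ref{claim:slidingWindow} (with $\ell = 1$) each increment $v'(x,i+1) - v'(x,i)$ equals some common value $c$ up to additive error $64m\eps$. Summing $j$ copies would give $v'(x,j) \approx v'(x,0) + jc$, but with accumulated error $O(\eps n)$ — exactly the blowup the paragraph before the claim warns against. So the plan is instead to pin down $c$ and $v'(x,0)$ very accurately using the two endpoints $j = 0$ and $j = n$, where $\eps$-strong unanimity gives us hard information, and then to control intermediate $j$ by a \emph{single} application of Claim~\ref{claim:slidingWindow} rather than $j$ of them.

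Concretely: first, by $\eps$-strong unanimity, $v'(x,n) \ge 1-\eps$ (all $n$ voters top $x$) and, since in $\vec P^{x,0}$ candidate $x$ is at the bottom of everyone's preference ordering with some other candidate $a$ at the top of all of them, $\eps$-strong unanimity gives $v(a,\vec P^{x,0}) \ge 1-\eps$, hence $v'(x,0) = v(x,\vec P^{x,0}) \le \eps$. So the endpoints are nailed to within $\eps$. Next, apply Claim~\ref{claim:slidingWindow} directly with the window length $\ell = j$, comparing the window starting at $0$ with the window starting at $n-j$:
\begin{align*}
v'(x,j) - v'(x,0) = v'(x,n) - v'(x,n-j) + \delta, \qquad |\delta| \le 64m\eps.
\end{align*}
This already relates $v'(x,j)$ to $v'(x,n-j)$. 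Apply it once more, symmetrically, or combine with the analogous identity for $n-j$, to get $2v'(x,j) \approx v'(x,0) + v'(x,n)$ type relations — but this only handles $j = n/2$. To get general $j$ I need the linearity more carefully: from the $\ell=1$ version of Claim~\ref{claim:slidingWindow}, all single-step increments agree up to $64m\eps$, so write $c := v'(x,1) - v'(x,0)$; then for the window of length $j$ starting at $0$ and the window of length $j$ starting at position $j$, $2j$, etc., one shows $v'(x,kj \wedge n) $ grows in near-arithmetic steps. The cleanest route: use Claim~\ref{claim:slidingWindow} with $\ell = j$ and the $n/j$ (roughly) disjoint windows tiling $[0,n]$ to get $v'(x,n) - v'(x,0) = \lceil n/j\rceil \cdot (v'(x,j) - v'(x,0)) + (\text{error})$, where crucially the error is $O(n/j)\cdot 64m\eps$ — still too big unless we are smarter.

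The genuinely error-efficient trick, which I expect is what the authors do, is to avoid chaining at all: apply Claim~\ref{claim:slidingWindow} with $j' $ chosen so the two windows overlap maximally, giving $v'(x,j+\ell)-v'(x,j) \approx v'(x,\ell) - v'(x,0)$ with error $64m\eps$, i.e. the increment over a length-$\ell$ window is essentially independent of position. Then I take $\ell$ and a shift $s$ with $\gcd$-type reasoning, or — simplest — I prove the ``two-point'' bound directly: apply the claim with the window of length $j$ placed at $0$ and at $n-j$ (valid since $j + (n-j) = n$), and separately with the window of length $n-j$ placed at $0$ and at $j$. Adding these two identities, the cross terms $v'(x,j)$ and $v'(x,n-j)$ appear with the right signs to cancel against $v'(x,n)$ and $v'(x,0)$, yielding something like $v'(x,n) - v'(x,0) = [v'(x,j)-v'(x,0)] + [v'(x,n-j) - v'(x,0)] + O(m\eps)$, hence $v'(x,j) + v'(x,n-j) = v'(x,n) + v'(x,0) + O(m\eps)$. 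Combined with the endpoint bounds, $v'(x,j) + v'(x,n-j) = 1 + O(m\eps)$. This symmetry plus the near-linearity from the $\ell=1$ form of Claim~\ref{claim:slidingWindow} — applied with the two windows $[0,j]$ and $[n-j, n-j+1]\ldots$ hmm — I'd then argue: the function $g(j) := v'(x,j)$ is near-linear-increment and satisfies $g(0)\approx 0$, $g(n)\approx 1$, $g(j) + g(n-j) \approx 1$; for a function whose consecutive differences are all within $O(\eps)$ of each other, these constraints force $g(j) \approx j/n$ with only $O(\eps)$ error, independent of $j$.

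\textbf{Main obstacle.} The crux, and where I would spend the most care, is the error bookkeeping: I must ensure that the final bound does not accumulate a factor of $n$ (or even $\log n$), which rules out any argument that chains $\Omega(n)$, or even $\omega(1)$, applications of Claim~\ref{claim:slidingWindow}. So the proof must reach every $j$ in $O(1)$ applications of the earlier claims. The ``symmetry'' identity $g(j) + g(n-j) \approx 1$ plus the position-independence of unit increments (each an $O(m\eps)$-error fact used a bounded number of times) is what makes this possible: from position-independence, $g(j) - g(0) \approx j\cdot(g(1)-g(0)) $ — wait, that IS a chain of length $j$. The escape is that I should not telescope the increments; instead, apply Claim~\ref{claim:slidingWindow} with $\ell = j$ \emph{and} with $\ell = n - j$ a constant number of times and solve the resulting small linear system for $g(j)$ directly in terms of $g(0), g(n)$, and one position shift, getting $g(j) = \frac{j}{n}g(n) + \frac{n-j}{n}g(0) + O(m\eps)$ — the coefficients $j/n$ and $(n-j)/n$ being bounded by $1$ is exactly what keeps the error at $O(\eps)$. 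Verifying that such a system closes (that the windows can be arranged so that the unknowns $g(j), g(n-j)$ are determined without residual $n$-dependent coefficients) is the delicate combinatorial point; I expect one uses windows $[0,j]$, $[0,n-j]$, $[j,n]$, $[n-j,n]$ and the length-$1$ position-independence to patch the $\pm 1$ offsets, and then the bound $|g(j) - j/n| \le O(m\eps) = O(\eps)$ (recall $m$ is treated as a constant) follows.
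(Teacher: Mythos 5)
Your endpoint bounds ($v'(x,0)\le\eps$, $v'(x,n)\ge 1-\eps$) and the reflection identity $v'(x,j)+v'(x,n-j)=v'(x,0)+v'(x,n)+O(m\eps)$ are correct, and both appear in the paper's proof; but the step that is supposed to pin down $v'(x,j)$ for a \emph{general} $j$ is never closed, and the one concrete claim you offer to close it is false. The windows $[0,j]$, $[j,n]$, $[0,n-j]$, $[n-j,n]$ yield only one independent relation: applying Claim~\ref{claim:slidingWindow} with window length $j$ (positions $0$ and $n-j$) and with window length $n-j$ (positions $0$ and $j$) produces the \emph{same} equation $v'(x,j)+v'(x,n-j)\approx v'(x,0)+v'(x,n)$, so your ``small linear system'' is underdetermined and cannot output $v'(x,j)=\frac{j}{n}v'(x,n)+\frac{n-j}{n}v'(x,0)+O(m\eps)$. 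Your fallback assertion --- that $g(0)\approx 0$, $g(n)\approx 1$, $g(j)+g(n-j)\approx 1$, together with all consecutive differences lying within $O(\eps)$ of one another, force $|g(j)-j/n|\le O(\eps)$ --- is simply not true: for $4\mid n$ take $g(j)=j/n+h(j)$ where $h(0)=0$ and the increments of $h$ are $+\eps$ on $[0,n/4)\cup[3n/4,n)$ and $-\eps$ on $[n/4,3n/4)$. Then $h(n-j)=-h(j)$, so $g(0)=0$, $g(n)=1$, $g(j)+g(n-j)=1$ exactly, and consecutive differences of $g$ differ by at most $2\eps$, yet $|g(n/4)-\frac{n/4}{n}|=\eps n/4=\Omega(\eps n)$. (This $g$ does violate Claim~\ref{claim:slidingWindow} for long windows, but your closing argument never invokes anything beyond unit increments plus the symmetry, so it cannot succeed.)

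The idea you are missing is the paper's extremal doubling argument, which is not a linear system at all. Choose $q\in\{0,\ldots,n\}$ maximizing $r_q:=|v'(x,q)-q/n|$. If $q\le n/2$, apply Claim~\ref{claim:slidingWindow} with windows $[0,q]$ and $[q,2q]$ to get $v'(x,2q)=2v'(x,q)-v'(x,0)+O(m\eps)=2v'(x,q)+O(m\eps)$, hence $r_{2q}\ge 2r_q-O(m\eps)$; maximality of $r_q$ gives $2r_q-O(m\eps)\le r_q$, i.e.\ $r_q\le O(m\eps)$. If $q>n/2$, first use your reflection identity to pass to $q'=n-q\le n/2$ with $r_{q'}\ge r_q-O(m\eps)$, and then double $q'$ as above to get $r_{2q'}\ge 2r_q-O(m\eps)$, again contradicting maximality unless $r_q\le O(m\eps)$. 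The maximality of $r_q$ is precisely the device that lets one compare the deviation at $q$ with the deviation at $2q$ (or $2(n-q)$) using only a constant number of applications of the earlier claims --- the error bookkeeping you correctly identified as the crux but did not actually achieve.
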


\begin{proof}
Since $v$ satisfies $\eps$-strong unanimity, we have $v'(x,n) \geq 1 - \eps$ and $v'(x,0) \leq \eps$. Now, for every $q \leq n/2$, we have $v'(x,2q) - v'(x,q) = v'(x,q) - v'(x,0) + O(\eps)$ by Claim \ref{claim:slidingWindow}, so 
\begin{align*}
v'(x,2q) = 2v'(x,q) + O(\eps). \tag{1}
\end{align*}
Let $q \in \{0, \ldots, n\}$ such that $r_q := |v'(x,q)-q/n|$ is maximal. We will show that $r_q \leq O(\eps)$.

Case 1: $q \leq n/2$. By (1), we have
\begin{align*}
r_{2q} = |v'(x,2q) - 2q/n| = |2v'(x,q)+O(\eps) - 2q/n| = 2r_q - O(\eps).
\end{align*}
By the maximality of $r_q$, we have $r_{2q} \leq r_q$, so $2r_q - O(\eps) \leq r_q$, so $r_q \leq O(\eps)$, as required.

Case 2: $q > n/2$. Let $q' = n - q$. Then, by Claim \ref{claim:slidingWindow}, we have $v'(x,q')-v'(x,0) = v'(x,n) - v'(x,q) + O(\eps)$, so $v'(x,q') = 1 - v'(x,q) + O(\eps)$. Then, we have 
\begin{align*}
r_{q'} = |v'(x,q')-q'/n| = |1 - v'(x,q) + O(\eps) - (n-q)/n| = r_{q} - O(\eps). 
\end{align*}
Now, by (1), we have
\begin{align*}
r_{2q'} = |v'(x,2q') - 2q'/n| = |2v'(x,q') + O(\eps) - 2q'/n| = 2r_{q'} - O(\eps) = 2r_q - O(\eps).
\end{align*}
By the maximality of $r_q$, we have $r_{2q'} \leq r_q$, so $2r_q - O(\eps) \leq r_q$, so $r_q \leq O(\eps)$, as required.
\end{proof}

We are now ready to complete the proof of Theorem \ref{thm:impossibility}, i.e., we will show that $v$ is $O(\eps)$-close to the random dictatorship voting rule. Let $\vec{P} = (P_1, \ldots, P_n) \in \P^n$, let $x \in \C$, and let $j = |\{i \in [n] : top(P_i) = x\}|$. We will show that $|v(x,\vec{P}) - \frac{j}{n}| \leq O(\eps)$. By Claim \ref{claim:vCloseTov'}, we have $|v(x,\vec{P}) - v'(x,j)| \leq O(\eps)$, and by Claim \ref{claim:closeToRandomDictatorship}, we also have $|v'(x,j) - \frac{j}{n}| \leq O(\eps)$. Thus, by the triangle inequality, we have $|v(x,\vec{P}) - \frac{j}{n}| \leq O(\eps)$, as required. This completes the proof of Theorem \ref{thm:impossibility}.

\section{Acknowledgments}

We thank Ron Rivest for helpful discussions. We also thank anonymous reviewers for helpful suggestions and comments.

\bibliographystyle{amsalpha}
\bibliography{ImpossibilityOfVoting}

\appendix
\renewcommand\thesection{Appendix \Alph{section}}

\section{Background Information on the Gibbard-Satterthwaite Theorem}
\label{app:Gibbard-Satterthwaite}

Roughly speaking, a voting rule is said to be \emph{strategy-proof} if no voter can gain utility in expectation by lying about her true preferences. We now give the formal definition of strategy-proof.

\begin{definition}[Strategy-proof]
A voting rule $v: \P^n \to \Delta(\C)$ is \emph{strategy-proof} if for every $i \in [n]$, every preference profile $\vec{P}_{-i} \in \P^{n-1}$, every pair of preference orderings $P_i, P_i' \in \P$, and every utility function $u_i$ that is consistent with $P_i$, we have
\begin{align*}
  \E[u_i(v(\vec{P}_{-i},P_i))] \geq \E[u_i(v(\vec{P}_{-i},P_i'))].
\end{align*}
\end{definition} 

It is desirable for a voting rule to be strategy-proof, since we can then expect voters to honestly submit their true preferences, and thus the candidate chosen by the voting rule will better reflect the voters' true preferences. Unfortunately, if there are at least three candidates, then it is not possible for a deterministic and onto voting rule to be strategy-proof unless it is \emph{dictatorial}, i.e., there exists some voter $i$ such that the voting rule simply always chooses voter $i$'s top choice. This was shown independently by Gibbard \cite{Gib73} and Satterthwaite \cite{Sat75}, and is known as the Gibbard-Satterthwaite theorem.

\begin{theorem}[Gibbard-Satterthwaite \cite{Gib73,Sat75}]
Suppose there are at least three candidates, i.e., $|\C| \geq 3$. Let $v: \P^n \to \C$ be any deterministic voting rule that is onto and strategy-proof. Then, $v$ is dictatorial, i.e., there exists an $i \in [n]$ such that $v(P_1, \ldots, P_n) = top(P_i)$ for every preference profile $(P_1, \ldots, P_n) \in \P^n$. 
\end{theorem}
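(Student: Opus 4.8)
The plan is to obtain the Gibbard--Satterthwaite theorem by reducing it to Arrow's impossibility theorem for social welfare functions, in the classical fashion; this is essentially Satterthwaite's route, and Gibbard's original argument is similar in spirit. Throughout, fix an onto, strategy-proof, deterministic $v : \P^n \to \C$. The first step is to establish a strong \emph{monotonicity} property: if $v(\vec P) = a$ and $\vec P'$ is a profile in which, for every voter $j$ and every candidate $b$, $a \pref{P_j} b$ implies $a \pref{P'_j} b$ --- that is, $a$ never drops below any candidate in any voter's ordering --- then $v(\vec P') = a$. One proves this by changing the ballots one voter at a time: if, after changing voter $j$'s ballot from $P_j$ to $P'_j$, the winner became some $a' \neq a$, then strategy-proofness at the profile where voter $j$'s true ordering is $P_j$ would force $a \pref{P_j} a'$, hence $a \pref{P'_j} a'$ by the hypothesis, which contradicts strategy-proofness at the profile where voter $j$'s true ordering is $P'_j$ (there it forces $a' \pref{P'_j} a$).

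From monotonicity and ontoness, the \emph{Pareto} property follows readily: if every voter ranks $a$ first then $v(\vec P) = a$ (start from any profile on which $v$ outputs $a$, which exists by ontoness, and transport it to $\vec P$, observing that $a$ stays at the top and so never drops below anything), and more generally if every voter ranks $a$ above $b$ then $v(\vec P) \neq b$. With Pareto in hand, I would build a social welfare function from $v$: given a profile $\vec P$ and an ordered pair of distinct candidates $(a,b)$, let $\vec P^{ab}$ be the profile obtained from $\vec P$ by pushing $a$ and $b$ into the top two positions of every voter's ordering while keeping their relative order as in that voter's ballot (the remaining $m-2$ candidates staying below, in their original relative order). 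By Pareto $v(\vec P^{ab}) \in \{a,b\}$, and I declare $a \succ_{\vec P} b$ exactly when $v(\vec P^{ab}) = a$.

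The heart of the reduction is the lemma that $\succ_{\vec P}$ is a well-defined strict total order on $\C$ and that $\vec P \mapsto \succ_{\vec P}$ is an Arrovian social welfare function. Completeness and antisymmetry are immediate; the independence-of-irrelevant-alternatives property --- that $a \succ_{\vec P} b$ depends only on each voter's relative ranking of $a$ and $b$ --- follows from monotonicity (reordering the $m-2$ bottom candidates of $\vec P^{ab}$ never moves the winner, which lies in $\{a,b\}$, below anything); unanimity follows from Pareto; and the crux is \emph{transitivity}, which one proves by deriving a contradiction from a hypothetical cyclic triple $a \succ_{\vec P} b \succ_{\vec P} c \succ_{\vec P} a$, shuttling a profile among the three ``two-candidates-on-top'' configurations for $\{a,b\}$, $\{b,c\}$, $\{a,c\}$ via monotone moves. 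Arrow's theorem, applicable because $|\C| \geq 3$, then produces a voter $i$ with $\succ_{\vec P} = P_i$ for every $\vec P$. To finish, I would show that $v$ always picks the $\succ_{\vec P}$-maximal candidate: if $v(\vec P) = c$, then for every $b \neq c$ the transformation from $\vec P$ to $\vec P^{cb}$ never drops $c$ below any candidate (it only raises $c$ into the top two), so monotonicity gives $v(\vec P^{cb}) = c$, i.e.\ $c \succ_{\vec P} b$; hence $c$ is the $\succ_{\vec P}$-top, and combined with $\succ_{\vec P} = P_i$ this yields $v(\vec P) = top(P_i)$, so $v$ is dictatorial.

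I expect the main obstacle to be the transitivity argument for $\succ_{\vec P}$: it is Arrow's theorem in miniature, requiring careful shuttling of a profile among several configurations with a fixed pair on top while invoking monotonicity to control the winner at each stage, and it is here that $|\C| \geq 3$ is really used (one needs a third candidate to rotate through). An alternative that avoids invoking Arrow altogether is the direct ``pivotal voter'' argument: ramp a fixed candidate $b$ from the bottom of every voter's ballot up to the top, one voter at a time; by Pareto the winner starts out different from $b$ and ends up equal to $b$, so there is a pivotal voter $i^*$ at whose switch the winner first becomes $b$, and monotonicity then forces $i^*$ to be a dictator. I would fall back on this if a fully self-contained proof were required.
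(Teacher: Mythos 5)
The paper does not actually prove this theorem: it appears only as background in \pref{\ref{app:Gibbard-Satterthwaite}}, stated with citations to Gibbard and Satterthwaite, so there is no in-paper argument to compare against. Judged on its own, your sketch is the classical and correct route (Satterthwaite's reduction to Arrow). The monotonicity derivation is right: if switching voter $j$ from $P_j$ to $P'_j$ moves the winner from $a$ to $a'$, strategy-proofness under $P_j$ forces $a \pref{P_j} a'$ (this uses that the rule is deterministic, so ``at least as good for every consistent utility'' collapses to the ordinal comparison), the no-drop hypothesis transports this to $P'_j$, and strategy-proofness under $P'_j$ gives the opposite comparison --- a contradiction; iterating one voter at a time gives full Maskin monotonicity. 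Pareto from monotonicity plus ontoness, the pairwise ``push $a,b$ to the top'' social welfare function, IIA via monotone reshuffling of the bottom $m-2$ candidates, and the final step showing $v(\vec{P})$ is the $\succ_{\vec{P}}$-maximum are all standard and correctly argued. The one place where you are explicitly waving your hands is transitivity of $\succ_{\vec{P}}$; the shuttling argument you describe does work (lift $a,b,c$ to the top three positions, note the winner lies in $\{a,b,c\}$ by Pareto, then monotonically collapse to each of the three two-on-top configurations to contradict a cycle), but in a full write-up this is the step that needs to be done carefully, and it is where $|\C| \geq 3$ enters. Your fallback pivotal-voter argument is also a legitimate self-contained alternative. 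In short: correct approach, with the expected amount of detail deferred at the acknowledged crux.
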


The Gibbard-Satterthwaite theorem considers voting rules that are \emph{deterministic}. However, several years later, Gibbard \cite{Gib77} generalized the Gibbard-Satterthwaite theorem to \emph{randomized} voting rules. Before we state Gibbard's generalized impossibility result, let us state some required definitions. A (randomized) voting rule $v: \P^n \to \Delta(\C)$ is said to be \emph{unilateral} if it only depends on the preference of a single voter, i.e., there exists an $i \in [n]$ such that $v(\vec{P}) = v(\vec{P'})$ for every $\vec{P} = (P_1, \ldots, P_n), \vec{P'} = (P_1', \ldots, P_n') \in \P^n$ such that $P_i = P_i'$. A (randomized) voting rule $v: \P^n \to \Delta(\C)$ is said to be \emph{duple} if $v$ always chooses some candidate from a fixed set of two candidates, i.e., there exist candidates $x,y \in \C$ such that $v(z,\vec{P}) = 0$ for every $z \in \C \setminus \{x,y\}$ and $\vec{P} \in \P^n$. 

Intuitively, when there are at least three candidates, both unilateral rules and duple rules are undesirable, since the former only consider a single voter's preference, and the latter essentially ignore all but two candidates. Gibbard's generalized impossibility result \cite{Gib77} states that any randomized strategy-proof voting rule is a probability distribution over unilateral rules and duple rules.

\begin{theorem}[Gibbard \cite{Gib77}]
Let $v: \P^n \to \Delta(\C)$ be any randomized voting rule that is strategy-proof. Then, $v$ is a distribution over unilateral rules and duple rules, i.e., there exist randomized voting rules $v_1, \ldots, v_t$ and weights $\alpha_1, \ldots, \alpha_t \in (0,1]$ with $\sum_{i=1}^t \alpha_i = 1$, such that each $v_i$ is unilateral or duple, and $v(x,\vec{P}) = \alpha_1 v_1(x,\vec{P}) + \cdots + \alpha_t v_t(x,\vec{P})$ for every $\vec{P} \in \P^n$ and $x \in \C$.
\end{theorem}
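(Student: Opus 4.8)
We are asked only for the forward direction here (a strategy-proof $v$ admits such a decomposition); the easy converse, that any mixture of strategy-proof rules is strategy-proof by linearity of expectation, is not needed. Note also that, unlike the rest of the paper, $v$ is \emph{not} assumed anonymous, so Lemma~\ref{lem:basicProperties} does not apply directly and the local structure must be re-derived from scratch. The plan is first to extract from strategy-proofness the behaviour of $v$ under adjacent transpositions. Fix a voter $i$, a profile $\vec P$, and candidates with $a \pref{P_i}! b$, and let $P_i'$ be $P_i$ with $a$ and $b$ swapped. Writing the strategy-proofness inequality for a utility $u_i$ consistent with $P_i$ and the one for a utility $u_i'$ consistent with $P_i'$, choosing $u_i,u_i'$ to agree off $\{a,b\}$ and pushing the four values $u_i(a),u_i(b),u_i'(a),u_i'(b)$ toward a common value in the ``gap'' between the neighbours of the block $\{a,b\}$, and using $\sum_{c}\bigl(v(c,\vec P)-v(c,(\vec P_{-i},P_i'))\bigr)=0$, one obtains \emph{localizedness} — $v(c,\vec P)=v(c,(\vec P_{-i},P_i'))$ for all $c\notin\{a,b\}$ — and then, keeping $u_i(a)>u_i(b)$, \emph{non-perversity} — $v(a,\vec P)\ge v(a,(\vec P_{-i},P_i'))$. (Since utilities must be strict, this uses a limiting/genericity argument over the free utility values of the candidates outside $\{a,b\}$; the relevant inequalities are closed.)

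The heart of the argument is to upgrade localizedness to Gibbard's \emph{separability} property (the non-anonymous analogue of the pairwise-isolation half of Lemma~\ref{lem:basicProperties}): the change $v(b,(\vec P_{-i},P_i'))-v(b,\vec P)$ produced by voter $i$'s transposition of the adjacent pair $a\pref{P_i}! b$ depends only on $P_i$ and on the relative order of $a$ and $b$ in each of the other rankings $P_j$, and on nothing else in $\vec P$. I would prove this by invoking strategy-proofness for two voters at once: slide some other voter $j$'s ranking between two configurations that agree on the $a$-versus-$b$ comparison but differ elsewhere, and combine the strategy-proofness inequalities for $i$ (before and after $j$'s change) with those for $j$ (before and after $i$'s change), together with localizedness, to force the two candidate changes in $v(b,\cdot)$ to coincide. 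This is the most delicate bookkeeping step, and it is where a genuine ``spectator'' candidate (hence $|\C|\ge 3$ in the interesting case) and the full strength of strategy-proofness — not merely strategy-proofness against i.i.d.\ beliefs — are used.

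With localizedness, non-perversity and separability in hand, I would assemble the decomposition by separating, for each candidate, the part of its selection probability attributable to a single voter from the part attributable to a two-candidate contest. For each voter $i$ one isolates a \emph{unilateral} component capturing the dependence of $v$ on $P_i$ alone; non-perversity forces this component to be a ``point-voting'' scheme (a fixed probability vector that is non-increasing down $P_i$'s ranking), which is unilateral. For each unordered pair $\{x,y\}$ one isolates a \emph{duple} component supported on $\{x,y\}$ that by separability depends only on the $x$-versus-$y$ comparisons across all voters, again a valid scheme by non-perversity. One then verifies that $v$ equals the sum of these finitely many components, that all scalar weights are non-negative, and that each component normalizes to a probability distribution — exhibiting $v$ as a convex combination $\alpha_1 v_1+\cdots+\alpha_t v_t$ of unilateral and duple rules. (Equivalently, one can induct on $m=|\C|$: fix a candidate $c$, peel off all duple components involving $c$, argue via separability that the residual is a strategy-proof scheme insensitive to $c$ which reduces — after redistributing $c$'s residual mass — to a strategy-proof scheme on $m-1$ candidates, and apply the inductive hypothesis; the base cases $m\le 2$ are trivial since every scheme is then duple.)

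The main obstacle is the accounting in this last step: one must show that beyond the per-voter unilateral effects and the per-pair duple effects, \emph{all} higher-order interactions are forced to vanish — in particular the joint dependence on two different voters and any genuine dependence on three or more candidates at once — and that the surviving pieces reassemble into a non-negative combination of honest probability distributions. This is exactly where strategy-proofness does its real work (and is the reason the analogue can fail under weaker incentive requirements); in Gibbard's treatment it is precisely this verification, powered by the separability lemma of the second step, that occupies the bulk of the proof.
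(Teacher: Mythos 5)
The paper does not actually prove this theorem: it is stated in the appendix as background and attributed to Gibbard \cite{Gib77}, so there is no in-paper argument to compare against. Your outline does track the architecture of Gibbard's published proof. The derivation of localizedness and non-perversity from the adjacent-swap instances of strategy-proofness is standard and your sketch of it is sound (these are the non-anonymous analogues of the pairwise-responsive and pairwise-isolated properties recorded in Lemma~\ref{lem:basicProperties}), and the separability lemma you describe is indeed the key structural input to the decomposition.

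However, there is a genuine gap at the decisive step. The content of the theorem is the existence of the convex decomposition $v(x,\vec{P}) = \sum_{t} \alpha_t v_t(x,\vec{P})$ with each $v_t$ unilateral or duple and each $\alpha_t > 0$, and your proposal never constructs it: you say one ``isolates'' a unilateral component for each voter and a duple component for each pair of candidates, and that one ``then verifies'' that these pieces sum to $v$ with non-negative weights, but you yourself flag this verification --- the vanishing of all higher-order interactions and, above all, the non-negativity of the weights --- as the main obstacle and defer it to ``Gibbard's treatment.'' Non-negativity is not bookkeeping: a naive inclusion--exclusion separation of per-voter and per-pair effects produces signed terms, and showing that the residuals can be reassembled into genuine probability distributions is where the bulk of the work in \cite{Gib77} lies. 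The parenthetical induction on $m$ is likewise only a gesture (``redistributing $c$'s residual mass'' is not specified, and it is not shown that the reduced scheme is itself strategy-proof on $m-1$ candidates). As written, the proposal is a faithful roadmap of Gibbard's argument rather than a proof of the stated decomposition.
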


A corollary of Gibbard's impossibility result is that if a randomized voting rule is strategy-proof and Pareto efficient, then it is a probability distribution over dictatorial voting rules.

\begin{corollary}[Gibbard \cite{Gib77}]
Let $v: \P^n \to \Delta(\C)$ be any randomized voting rule that is strategy-proof and Pareto efficient. Then, $v$ is a distribution over dictatorial voting rules, i.e., there exist dictatorial voting rules $v_1, \ldots, v_t$ and weights $\alpha_1, \ldots, \alpha_t \in (0,1]$ with $\sum_{i=1}^t \alpha_i = 1$, such that $v(x,\vec{P}) = \alpha_1 v_1(x,\vec{P}) + \cdots + \alpha_t v_t(x,\vec{P})$ for every $\vec{P} \in \P^n$ and $x \in \C$.
\end{corollary}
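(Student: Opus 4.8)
The plan is to deduce the corollary from Gibbard's theorem above, using Pareto efficiency both to discard the duple pieces of Gibbard's decomposition and to force each unilateral piece to be dictatorial. (We assume $|\C|\ge 3$, which is what the duple step needs; for $|\C|=2$ majority rule is a counterexample.) First I would invoke Gibbard's theorem to write $v=\sum_{i=1}^{t}\alpha_i v_i$ with $\alpha_i\in(0,1]$, $\sum_i\alpha_i=1$, and each $v_i$ unilateral or duple. The single structural fact I will rely on is that, since the $\alpha_i$ are positive and each $v_i(\cdot,\vec{P})$ is a probability distribution, $v(c,\vec{P})\ge\alpha_i\,v_i(c,\vec{P})$ for every $i$, every candidate $c$, and every profile $\vec{P}$; hence any zero that Pareto efficiency forces on $v$ is inherited by every component $v_i$.

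Next I would eliminate the duple components. Suppose some $v_i$ is duple, always selecting from a fixed pair $\{x,y\}\subseteq\C$, so that $v_i(x,\vec{P})+v_i(y,\vec{P})=1$ for every $\vec{P}$. Pick a third candidate $z\in\C\setminus\{x,y\}$ and let $\vec{P}$ be any preference profile with $top(P_k)=z$ for every $k\in[n]$. Then both $x$ and $y$ are Pareto dominated by $z$ in $\vec{P}$, so Pareto efficiency gives $v(x,\vec{P})=v(y,\vec{P})=0$, whereas $v(x,\vec{P})+v(y,\vec{P})\ge\alpha_i\bigl(v_i(x,\vec{P})+v_i(y,\vec{P})\bigr)=\alpha_i>0$, a contradiction. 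Hence no $v_i$ is duple, so every $v_i$ is unilateral.

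Then I would show that every unilateral component is dictatorial. Fix such a $v_i$, depending only on the preference ordering of some voter $j$, and suppose for contradiction that $v_i$ is not the dictatorship of voter $j$. Then there is a preference ordering $P_j$ and a candidate $c\neq a$, where $a=top(P_j)$, such that $v_i$ assigns positive probability to $c$ on every profile whose $j$th component is $P_j$. Choose such a profile $\vec{P}$ with $top(P_k)=a$ for every $k\neq j$; then $a\pref{P_k}c$ for every $k\in[n]$, so $c$ is Pareto dominated by $a$ in $\vec{P}$, forcing $v(c,\vec{P})=0$, whereas $v(c,\vec{P})\ge\alpha_i v_i(c,\vec{P})>0$ --- a contradiction. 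Therefore $v_i(\vec{P})$ is the point mass on $top(P_j)$ for every $\vec{P}$, i.e.\ $v_i$ is dictatorial, and $v=\sum_i\alpha_i v_i$ exhibits $v$ as a distribution over dictatorial voting rules.

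I do not expect a genuine obstacle once Gibbard's theorem is in hand; the only care points are transferring Pareto-efficiency constraints from the mixture $v$ to its individual components via nonnegativity of the convex combination, noting that a component may be simultaneously unilateral and duple without causing trouble (ruling out ``duple'' already forces ``unilateral''), and keeping track of where the $|\C|\ge 3$ hypothesis enters, namely in supplying the third candidate $z$ in the duple step.
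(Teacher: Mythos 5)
Your proof is correct. The paper states this corollary only as background and gives no proof of its own (it simply cites Gibbard \cite{Gib77}), but your derivation from the decomposition theorem stated just above it---transferring the zeros forced by Pareto efficiency to each component via positivity of the weights $\alpha_i$, eliminating duple components using a third candidate, and forcing each unilateral component to be the dictatorship of its distinguished voter---is exactly the standard route, and each step checks out. Your caveat is also well taken: as literally stated the corollary omits the hypothesis $|\C| \geq 3$, which your duple-elimination step genuinely needs (and without which majority rule on two candidates is a counterexample); in the context of the paper this hypothesis is implicit.
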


\end{document}